%% file: main.tex
\documentclass{article}

\PassOptionsToPackage{numbers,compress}{natbib}
\usepackage[preprint]{style}

\usepackage[utf8]{inputenc} 
\usepackage[T1]{fontenc}    
\usepackage{hyperref}       
\usepackage{url}            
\usepackage{graphicx}       
\usepackage{booktabs}       
\usepackage{amsfonts}       
\usepackage{nicefrac}       
\usepackage{microtype}      
\usepackage{xcolor}         
\usepackage{placeins} 
\usepackage{float}          
\usepackage{algorithm}
\usepackage{algpseudocode}
\usepackage{amsfonts,amsthm,amsmath} 

\theoremstyle{definition}
\newtheorem{theorem}{Theorem}[section]
\newtheorem{lemma}{Lemma}[section]
\newtheorem{definition}{Definition}[section]
\newtheorem{example}{Example}[section]
\newtheorem{remark}{Remark}[section]
\newtheorem{corollary}{Corollary}[section]
\newtheorem{assumption}{Assumption}[section]
\newtheorem{proposition}{Proposition}[section]

\input{Math_command}

\title{Impossibility of Distribution-Free Predictive Inference for Individual Treatment Effects}

\author{%
Chongguang Tao\thanks{These authors are co-first authors, contributed equally to this work, and are listed in alphabetical order.} \quad
Zheng Zhou\footnotemark[1] \quad
Yuhong Yang\thanks{Corresponding author} \\
Qiuzhen College and Yau Mathematical Sciences Center, Tsinghua University \\
\texttt{\{taocg21, zhouz23\}@mails.tsinghua.edu.cn, yyangsc@tsinghua.edu.cn}
}

\begin{document}

\maketitle

\begin{abstract}
  Uncertainty quantification for individual treatment effects (ITEs) is a daunting challenge in causal inference. Motivated by recent advances in conformal prediction, several works aim to construct distribution-free prediction sets for ITEs with desired coverage under standard assumptions such as strong ignorability and overlap. In this paper, we show that such goals are fundamentally unattainable in the presence of continuous covariates. Specifically, we establish finite-sample and asymptotic impossibility results demonstrating that any distribution-free prediction set achieving desired coverage for ITEs must be trivial, in the sense that it has infinite expected length. Our analysis relies on a connection between ITE inference and the hardness of conditional independence testing, and highlights the intrinsic limitations imposed by the missing data nature of causal inference. These results provide a new perspective on existing methods, clarifying that their apparent success necessarily relies on additional structural assumptions beyond standard causal assumptions. \looseness=-1
\end{abstract}

\section{Introduction}

\input{Sections/Introduction}

\section{Main Results}

\label{sec:main_results}

\input{Sections/Notations}

\input{Sections/Main_results}

\section{Reviewing Some Previous Methods}
\input{Sections/Other_methods}

\section{Experiments}
\label{sec:experiments}
\input{Sections/Experiments.tex}

\section{Conclusions}
\input{Sections/Conclusions.tex}

\newpage
\bibliographystyle{unsrtnat}
\bibliography{references}

\newpage
\appendix

\section{Proof of Main Theorems} 

\input{Sections/Proofs}

\section{Definitions of Stochastic Orders}

\input{Sections/Stochastic_orders_appendix}

\section{Detailed Experimental Setup and Numerical Results}

\input{Sections/Detailed_Exp}

\end{document}

%% file: Math_command.tex
\def\eqref#1{equation~\ref{#1}}

\def\1{\bm{1}}

\def\gB{{\mathcal{B}}}

\def\gO{{\mathcal{O}}}
\def\gP{{\mathcal{P}}}

\def\sR{{\mathbb{R}}}

\newcommand{\E}{\mathbb{E}}



%% file: Sections/Introduction.tex
Quantifying uncertainty for individual treatment effects (ITEs) is a daunting challenge in modern causal inference, with applications ranging from personalized medicine to policy evaluation \citep{kent2018personalized,kitagawa2018should,athey2021policy}. Although substantial progress has been made in estimating conditional average treatment effects (CATEs) \citep{hill2011bayesian,wager2018estimation,kunzel2019metalearners,nie2021quasi,kennedy2023towards}, uncertainty quantification for ITEs remains considerably more difficult and is especially appealing in high-stakes settings, where decisions are made for individuals rather than populations. A natural goal, therefore, is to construct prediction sets for ITEs with valid coverage.

Conformal prediction provides a general framework for constructing prediction sets with finite-sample, distribution-free coverage guarantees \citep{shafer2008tutorial,lei2018distribution,10.1561/2200000101,angelopoulos2026theoreticalfoundationsconformalprediction}, and has recently been extended to causal inference \citep{lei2021conformal,jin2023sensitivity,alaa2023conformal,wang2025conformal}. In particular, conformal methods for counterfactual inference \citep{lei2021conformal} and conformal meta-learners \citep{alaa2023conformal} suggest that nontrivial predictive inference for ITEs may be achievable in some settings. This naturally raises an essential question: under standard causal assumptions such as strong ignorability and overlap in treatment assignment \citep{rosenbaum1983central}, is distribution-free predictive inference for ITEs actually possible? \looseness=-1

In this paper, we answer this question in the negative. Under strong ignorability and standard overlap conditions, any distribution-free prediction set with valid coverage for the ITE must be trivial, in the sense that it covers any pre-specified real value with high probability and therefore has infinite expected length. This impossibility holds in both finite-sample and asymptotic regimes, showing that distribution-free predictive inference for ITEs is infeasible without additional structural assumptions. 

At a technical level, our impossibility results are closely connected to \textbf{the hardness of conditional independence testing}. Building on \citep{10.1214/19-AOS1857}, we show that valid predictive inference for ITEs is intrinsically linked to testing conditional independence with a continuous conditioning variable, a problem that is fundamentally intractable in a distribution-free sense. The result also reflects two difficulties specific to causal inference. First, ITE inference inherits the difficulty of \textbf{counterfactual prediction under covariate shift}: calibration across treatment groups depends on the propensity score $e(x)$, whose unknown nature makes reliable reweighting intrinsically difficult \citep{lei2021conformal,yang2024doubly}. This hardness propagates directly to methods such as \citep{lei2021conformal}, which construct ITE prediction sets by first predicting counterfactual outcomes. Second, the \textbf{missing data nature of causal inference} \citep{holland1986statistics} creates a deeper non-identifiability: for each unit, one potential outcome is unobserved, so different data-generating processes can induce the same observed distribution while implying arbitrarily different ITEs. Our proof exploits this indistinguishability, forcing any uniformly valid prediction set to be uninformative. \looseness=-1

Our analysis provides a new perspective on existing methods for ITE inference. The impossibility result implies that any procedure producing nontrivial prediction sets must rely on assumptions beyond strong ignorability and overlap. This clarifies the role of additional conditions in prior work: conformal inference for counterfactuals \citep{lei2021conformal} relies on \textbf{knowledge or accurate estimation of the propensity score} to validly reweight across treatment groups, whereas conformal meta-learners \citep{alaa2023conformal} depend on \textbf{stochastic dominance conditions} that cannot hold uniformly over the model class. Thus, the apparent success of existing methods reflects additional structure rather than a resolution of the fundamental hardness of ITE inference.

\paragraph{Our contributions.}
First, we establish finite-sample and asymptotic impossibility results for distribution-free predictive inference of individual treatment effects under strong ignorability and standard overlap, showing that any uniformly valid procedure must be trivial and have infinite expected length. Second, we explain this impossibility through two fundamental features of the problem, namely the hardness of conditional independence testing and the missing-data nature of causal inference, and also give a positive result under additional structural assumptions, showing that informative ITE prediction is possible under sensible additional conditions in the standard causal framework. Third, we revisit representative conformal methods for ITE inference, showing that their apparent success actually relies on additional assumptions, and use simulations to illustrate their limitations in practice.

%% file: Sections/Notations.tex
\subsection{Notations and Setup}

We work in the potential outcomes framework with binary treatment \citep{holland1986statistics}. For units \(i=1,\dots,n\), let \(T_i\in\{0,1\}\) denote the treatment indicator, \(Y_i(1)\) and \(Y_i(0)\) the potential outcomes, and \(X_i\in\mathbb R^p\) the covariates, with at least one coordinate being continuously distributed. The observed outcome is $Y_i = T_i Y_i(1) + (1-T_i) Y_i(0).$ Under the superpopulation framework, we assume that
\[
(X_i, Y_i(1), Y_i(0), T_i) \stackrel{\mathrm{i.i.d.}}{\sim} P, \qquad i=1,\dots,n,
\]
where \(P\) is an unknown joint distribution on \((X, Y(1), Y(0), T)\). Let \(\mathcal O_n = \{(X_i,Y_i,T_i)\}_{i=1}^n\) denote the observed data, and let \(\overline{\mathcal O}_n = \{(X_i,Y_i(1),Y_i(0),T_i)\}_{i=1}^n\) denote the corresponding full data. Our estimand of interest is the individual treatment effect (ITE), $\Delta = Y(1) - Y(0)$.

Let \(\mathcal E\) denote the class of all probability distributions on \((X, Y(1), Y(0), T)\) such that at least one coordinate of \(X\) is absolutely continuous with respect to the one-dimensional Lebesgue measure, \(T\) is binary, and \(Y(1)\) and \(Y(0)\) are continuous. Define the strong-ignorability class \citep{rosenbaum1983central} by
\[
\mathcal P_{\mathrm{SI}}
= \bigl\{ P \in \mathcal E : (Y(1),Y(0)) \perp T \mid X \bigr\}.
\]

In causal inference problems, one additionally imposes an overlap condition on the propensity score $e_P(X) := P(T=1\mid X)$ \citep{rosenbaum1983central}. We suppress the subscript $P$ (distribution) and write \(e(X)\) when there is no ambiguity. In the main text, we focus on the strong-overlap class
\[
\mathcal P_{\mathrm{SI}}^{\mathrm{str}}
:=
\Bigl\{P\in\mathcal P_{\mathrm{SI}}:
 e_P(X)\in[c_P,1-c_P]\ \text{a.s. for some } c_P>0
\Bigr\}.
\]
That is, the propensity score is bounded away from $0$ and $1$, with the constant \(c_P\) allowed to depend on \(P\). In Appendix~\ref{app:overlap_classes}, we also consider broader overlap-based classes.

A (possibly randomized) prediction set for the ITE is a mapping
\[
\widehat C_{n,\alpha} : \mathbb R^p \to \gB(\sR):=\{\textup{Borel subsets of }\sR\},
\]
constructed from the observed data \(\mathcal O_n\). Consider a new unit from the superpopulation with covariates $X$ and the ITE \(\Delta\). We say that \(\widehat C_{n,\alpha}\) achieves finite-sample and asymptotic distribution-free validity at level \(1-\alpha\) over the class \(\mathcal P_{\mathrm{SI}}^{\mathrm{str}}\) if, respectively,
\[
\sup_{P \in \mathcal P_{\mathrm{SI}}^{\mathrm{str}}}
\mathbb P_P\!\left( \Delta \notin \widehat C_{n,\alpha}(X) \right)
\le \alpha,
\qquad
\limsup_{n\to\infty}\sup_{P \in \mathcal P_{\mathrm{SI}}^{\mathrm{str}}}
\mathbb P_P\!\left( \Delta \notin \widehat C_{n,\alpha}(X) \right)
\le \alpha .
\]

In the standard supervised-learning setting, conformal prediction can achieve the above finite-sample distribution-free validity under exchangeability \citep{shafer2008tutorial,lei2018distribution,10.1561/2200000101,angelopoulos2026theoreticalfoundationsconformalprediction}. Our results below show that this guarantee does not extend to the causal setting: even under strong ignorability and overlap assumptions, distribution-free validity for ITE prediction is impossible without additional structural assumptions.

%% file: Sections/Main_results.tex
\subsection{Impossibility Results under Overlap Assumptions}

\label{sec:overlap}

In this subsection, we state our main results, which show that without additional model assumptions, nontrivial inference with a coverage guarantee for the ITE $\Delta$ is impossible. For this reason, we refer to these results as \textbf{impossibility results}. Here and below, $\mathrm{length}$ denotes Lebesgue measure. We begin with the asymptotic result under strong overlap; the finite-sample counterpart is stated in Appendix~\ref{appendix:finite_impossibility}. If the prediction set only achieves asymptotic distribution-free validity at level $1-\alpha$ over $\mathcal P_{\mathrm{SI}}^{\mathrm{str}}$, we have the following asymptotic impossibility result:

\begin{theorem}[Asymptotic impossibility under strong overlap]
\label{thm:asymptotic_impossibility_overlap}
Suppose a sequence of prediction sets \(\{\widehat C_{n,\alpha}\}_{n\ge1}\) satisfies the asymptotic distribution-free validity at level $1-\alpha$ over $\mathcal P_{\mathrm{SI}}^{\mathrm{str}}$:
\[
\limsup_{n\to\infty}
\sup_{P \in \mathcal P_{\mathrm{SI}}^{\mathrm{str}}}
\mathbb P_P\!\left( \Delta \notin \widehat C_{n,\alpha}(X) \right)
\le \alpha .
\]
Then, for any \(P \in \mathcal P_{\mathrm{SI}}\) and any \(y \in \mathbb R\),
\[
\limsup_{n\to\infty}
\mathbb P_P\!\left( y \notin \widehat C_{n,\alpha}(X) \right)
\le \alpha,
\]
and hence
\[
\lim_{n\to\infty}
\mathbb E_P\!\left[ \mathrm{length}\bigl(\widehat C_{n,\alpha}(X)\bigr) \right]
= \infty .
\]
\end{theorem}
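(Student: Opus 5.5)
Given $P\in\mathcal P_{\mathrm{SI}}$ and $y\in\mathbb R$, the plan is to construct distributions $Q_m\in\mathcal P_{\mathrm{SI}}^{\mathrm{str}}$. Each $Q_m$ should have an ITE concentrated near $y$ on most of the covariate mass, while its observed-data law is close in total variation to the observed-data law of a distribution $\tilde P_m\in\mathcal P_{\mathrm{SI}}^{\mathrm{str}}$ that in turn approximates $P$. Coverage under $Q_m$ then forces $\widehat C_{n,\alpha}(X)$ to contain $y$ with probability roughly $1-\alpha$. Because the procedure sees only $\mathcal O_n$ and the new $X$, this transfers to $P$ up to TV errors. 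The argument follows the Shah--Peters hardness of conditional independence testing \citep{10.1214/19-AOS1857}, with a continuous coordinate of $X$ as conditioning variable.

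\textbf{Step 1 (reduction to strong overlap and observed-data dependence).} The event $\{y\notin \widehat C_{n,\alpha}(X)\}$ depends only on $(\mathcal O_n,X)$ and the external randomization. For two full-data laws $P_1,P_2$ we therefore have
\[
\bigl|\mathbb P_{P_1}(y\notin\widehat C_{n,\alpha}(X))-\mathbb P_{P_2}(y\notin\widehat C_{n,\alpha}(X))\bigr|\le (n+1)\,d_{TV}(P_1^{\mathrm{obs}},P_2^{\mathrm{obs}}),
\]
where $P^{\mathrm{obs}}$ is the law of $(X,Y,T)$. The factor $n+1$ is too crude, so I will arrange closeness in a form that survives $n$ samples; see Step 2. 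First, approximate $P$ by $\tilde P$ with $e_{\tilde P}$ truncated to $[c,1-c]$, keeping the conditional law of $(Y(1),Y(0))$ given $X$ unchanged. This keeps SI. The TV distance of the $n$-sample observed laws is at most $n\,\mathbb P_P(e(X)\notin[c,1-c])$. A fixed $c$ does not make this vanish as $n\to\infty$, so take $c=c_n\downarrow0$ fast enough that $n\,\mathbb P_P(e\notin[c_n,1-c_n])\to0$. This is possible only in a limiting sense, and it is where the $\limsup$ statement (rather than a fixed-$n$ bound) is used.

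\textbf{Step 2 (the indistinguishable construction, the main obstacle).} Fix $n$ and $\tilde P$. Following Shah--Peters, discretize the continuous coordinate $X_1$ finely into $M$ bins. Then build $Q$ in which, within each bin, the observed $Y$ is (almost) a deterministic function of $(X_1,T)$ by encoding via tiny fluctuations in $X_1$. Concretely, let the fractional digits of $X_1$ at a scale far below the bin width carry the value of the unobserved potential outcome. Set $Y(1-T)$ so that $Y(1)-Y(0)\approx y$, while keeping $(Y(1),Y(0))\perp T\mid X$ and $e_Q=e_{\tilde P}$.

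The key tension is that SI must hold under $Q$. I will therefore define $Q$ by first drawing $(X,T,Y^{\mathrm{obs}})$ from a perturbation of $\tilde P^{\mathrm{obs}}$, then setting $Y(0):=Y(1)-y$ or $Y(1):=Y(0)+y$. For SI, the pair $(Y(1),Y(0))$ given $X$ must not depend on $T$. This works if, within each fine cell of $X$, the law of $Y(1)$ given $X$ under the $T=1$ arm matches that of $Y(0)+y$ from the $T=0$ arm. That requires moving $T=0$ outcomes by $y$. This is achieved approximately by hiding the information in $X$: since $X_1$ is continuous, one can couple $\tilde P^{\mathrm{obs}}$ to a law where $X$ (at microscopic scale) determines both arms' outcomes. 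The resulting observed TV distance is $O(1/M)$ per sample, uniformly in the data-generating law, which is the Shah--Peters lemma. Then take $M\gg n$ so the $n$-fold TV is small. Continuity of $Y(1),Y(0)$ is restored by adding a vanishing independent smoothing noise. The ITE under $Q$ then lies within $\epsilon$ of $y$, not exactly at $y$. I will handle this by applying the argument to $y'$ and using $\epsilon\to0$. Alternatively, I will make $\Delta=y$ exactly with a continuous common noise shared by both potential outcomes, which is the cleaner choice.

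\textbf{Step 3 (conclude).} Coverage gives $\mathbb P_{Q}(\Delta\notin\widehat C)\le\alpha+o(1)$. Since $\Delta=y$ under $Q$, this reads $\mathbb P_Q(y\notin\widehat C)\le\alpha+o(1)$. Transferring to $\tilde P$ and then $P$ via Steps 1--2 gives the first claim. For length, apply Fubini: $\mathbb E_P[\mathrm{length}(\widehat C)]=\int_{\mathbb R}\mathbb P_P(y\in\widehat C)\,dy$. Fatou's lemma and $\liminf_n\mathbb P_P(y\in\widehat C)\ge1-\alpha>0$ for every $y$ then give divergence to $\infty$.
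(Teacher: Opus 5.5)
Your Step 2 has a genuine gap, and it is the core of the argument. You want the fooling law $Q$ to lie inside $\mathcal P_{\mathrm{SI}}$, with $\Delta\equiv y$ and $e_Q=e_{\tilde P}$. But SI together with $\Delta\equiv y$ forces, for a.e.\ $x$, the law $\mathcal L_Q(Y\mid X=x,T=0)$ to equal $\mathcal L_Q(Y\mid X=x,T=1)$ translated by $-y$. When all candidate laws share the propensity $e_{\tilde P}$, every mixture of them inherits this relation, however you encode outcomes in the fine digits of $X_1$.

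Here is a concrete failure. Take $e_{\tilde P}\equiv 1/2$, with both observed arms of $\tilde P$ equal to $N(0,1)$. Under every such $Q$ the treatment is independent of $(X,Y(1),Y(0))$, so the two observed arms are exact translates. Already for one observation this gives $\mathrm{TV}(Q^{\mathrm{obs}},\tilde P^{\mathrm{obs}})\ge \frac{1}{2}\mathrm{TV}(N(0,1),N(y,1))$, and hence the same lower bound for $n$ observations.

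So the asserted per-sample bound $O(1/M)$, ``uniformly in the data-generating law'', is false. It is also not what Shah--Peters prove. Their hardness is a fixed-$n$ statement about tests, not a per-sample TV approximation, and laws satisfying SI with $\Delta\equiv y$ are not TV-dense. Any construction in this spirit must randomize over fooling laws whose propensities vary with the fine structure of $X$. That is exactly what $e_Q=e_{\tilde P}$ forbids.

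The missing idea is that the fooling law need not satisfy SI at all. Regard $\eta=\mathbf 1\{Y_{n+1}(1)-Y_{n+1}(0)\notin\widehat C_{n,\alpha}(X_{n+1})\}$ as a full-data test of $(Y(1),Y(0))\perp T\mid X$. By Shah--Peters (Lemma~\ref{lem:asymp_nofreelunch}), its asymptotic level over $\mathcal P_{\mathrm{SI}}$ extends to every $Q\in\mathcal E$. Then take the non-SI $Q$ that keeps $P_{X,T}$ and the observed arm, and sets $Y(1)=Y+y$ if $T=0$ and $Y(0)=Y-y$ if $T=1$. It has exactly the observed law of $P$ and $\Delta\equiv y$, which gives the first claim with no approximation.

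The passage from $\mathcal P^{\mathrm{str}}_{\mathrm{SI}}$ to $\mathcal P_{\mathrm{SI}}$ is Lemma~\ref{lem:denseness_overlap_subclass}. Replacing $T$ by an independent fair coin with probability $\varepsilon$ preserves SI, gives $e\in[\varepsilon/2,1-\varepsilon/2]$, and moves the full-data law by at most $\varepsilon$ in TV. Hence the two suprema coincide at each fixed $n$.

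Your truncation in Step 1 would also work, but not with the bound $n\,\mathbb P_P(e\notin[c_n,1-c_n])$. That bound does not vanish when $e\in\{0,1\}$ with positive probability, which $\mathcal P_{\mathrm{SI}}$ allows. Use $\mathbb E_P|e-\tilde e|\le c_n$ instead. Your Step 3 matches the paper's Fatou--Fubini argument.
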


The asymptotic theorem shows that any prediction set achieving distribution-free validity for $\Delta$ at level $1-\alpha$ must eventually, with probability at least $1-\alpha$, cover any pre-specified real number, and hence must have diverging expected length. Appendix~\ref{appendix:finite_impossibility} gives the corresponding finite-sample version. \looseness=-1

\subsection{A Positive Result under Additional Structural Assumptions}

The impossibility results above show that, with continuous covariates, nontrivial distribution-free predictive inference for the ITE is impossible under strong ignorability and overlap alone. A natural way to obtain informative intervals is to impose additional structural assumptions on the data-generating process. We illustrate this in a simple, admittedly restrictive setting, not to claim broad applicability, but to show that informative prediction is possible beyond the minimal causal assumptions. Specifically, we consider an additive model in which the random error components are independent of the covariates and the two potential outcomes are conditionally independent given the covariates. \looseness=-1

For $t\in\{0,1\}$, let
\[
\mu_t(x):=\mathbb E[Y(t)\mid X=x], \qquad
\tau(x):=\mu_1(x)-\mu_0(x), \qquad
\varepsilon_t:=Y(t)-\mu_t(X).
\]
Denote the treated and control indices by
\[
\mathcal T:=\{i:T_i=1\},\qquad \mathcal C:=\{i:T_i=0\}.
\]

\begin{assumption}\label{ass:positive_structure}
Suppose that:
\begin{enumerate}
    \item[(i)] The joint distribution of $(X,Y(1),Y(0),T)$ satisfies strong ignorability and strong overlap;
    \item[(ii)] for $t\in\{0,1\}$, the residuals and the potential outcomes satisfy: $\varepsilon_t\perp X,$ $Y(1)\perp Y(0)\mid X$. 
\end{enumerate}
\end{assumption}

Under Assumption~\ref{ass:positive_structure},
let $\xi:=\varepsilon_1-\varepsilon_0$. Then
\[
\Delta=Y(1)-Y(0)=\tau(X)+\xi,
\]
and $\xi$ is independent of $X$. This suggests a splitting strategy: estimate $\tau(\cdot)$ on one part and recover the distribution of $\xi$ on the other. We formalize this construction in Algorithm~\ref{alg:split_pair}. Under the assumptions above, the following result establishes the validity of Algorithm~\ref{alg:split_pair}.\looseness=-1

\begin{algorithm}[t]
\caption{Split-and-pair prediction interval for the ITE}
\label{alg:split_pair}
\begin{algorithmic}[1]
\Require Data $\{(X_i,T_i,Y_i)\}_{i=1}^n$, level $\alpha\in(0,1)$, split ratio $\rho_n\in(0,1)$.
\State Sample without replacement subsets $\mathcal T_1\subset\mathcal T$ and $\mathcal C_1\subset\mathcal C$ of sizes $\lfloor \rho_n|\mathcal T|\rfloor$ and $\lfloor \rho_n|\mathcal C|\rfloor$, respectively; define $\mathcal I_1:=\mathcal T_1\cup \mathcal C_1$, $\mathcal T_2:=\mathcal T\setminus \mathcal T_1$, and $\mathcal C_2:=\mathcal C\setminus \mathcal C_1$.
\State Using $\mathcal I_1$, estimate $\hat\mu_1(\cdot)$ and $\hat\mu_0(\cdot)$, and set $\hat\tau(x):=\hat\mu_1(x)-\hat\mu_0(x)$.
\State For $i\in\mathcal T_2$ and $j\in\mathcal C_2$, define $\hat\varepsilon_i^{(1)}:=Y_i-\hat\mu_1(X_i)$ and $\hat\varepsilon_j^{(0)}:=Y_j-\hat\mu_0(X_j)$.
\State Let $m_n:=\min\{|\mathcal T_2|,|\mathcal C_2|\}$. Randomly pair $m_n$ treated residuals and $m_n$ control residuals without replacement, yielding pairs $(i_1,j_1),\dots,(i_{m_n},j_{m_n})$, and form $\hat W_k:=\hat\varepsilon_{i_k}^{(1)}-\hat\varepsilon_{j_k}^{(0)}$ for $k=1,\dots,m_n$.
\State Let $\hat q_{\alpha/2}$ and $\hat q_{1-\alpha/2}$ be the empirical $\alpha/2$- and $(1-\alpha/2)$-quantiles of $\{\hat W_k\}_{k=1}^{m_n}$.
\State Output $\hat C_{n,\alpha}(x):=\bigl[\hat\tau(x)+\hat q_{\alpha/2},\ \hat\tau(x)+\hat q_{1-\alpha/2}\bigr]$.
\end{algorithmic}
\end{algorithm}

\begin{theorem}
\label{prop:positive_rate}
Suppose Assumption~\ref{ass:positive_structure} holds. Assume further that:
\begin{enumerate}
    \item[(i)] for each $t\in\{0,1\}$, $\|\hat\mu_t-\mu_t\|_\infty = o_P(1)$;
    \item[(ii)] the distribution of $\xi$ admits a density that is
    continuous, bounded, and strictly positive in neighborhoods of its $\alpha/2$- and $1-\alpha/2$-quantiles;
    \item[(iii)] $n\rho_n\to\infty$ and $n(1-\rho_n)\to\infty$ as $n\to\infty$.
\end{enumerate}

Then
\[
\mathbb P_P\!\left(
\Delta_{n+1}\in \hat C_{n,\alpha}(X_{n+1})
\right)
=1-\alpha+o(1).
\]
\end{theorem}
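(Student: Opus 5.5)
Conditional on the data $\mathcal O_n$ (and on the randomization in Algorithm~\ref{alg:split_pair}), the new unit is independent and $\Delta_{n+1}=\tau(X_{n+1})+\xi_{n+1}$, with $\xi_{n+1}\perp X_{n+1}$ by Assumption~\ref{ass:positive_structure}. The coverage event is therefore
\[
\hat q_{\alpha/2}+\bigl(\tau-\hat\tau\bigr)(X_{n+1})\le \xi_{n+1}\le \hat q_{1-\alpha/2}+\bigl(\tau-\hat\tau\bigr)(X_{n+1}).
\]
Let $F_\xi$ denote the CDF of $\xi$ and $q_\beta$ its $\beta$-quantile. The plan is to show $\hat q_{\alpha/2}\to q_{\alpha/2}$ and $\hat q_{1-\alpha/2}\to q_{1-\alpha/2}$ in probability, and $\|\hat\tau-\tau\|_\infty=o_P(1)$. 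Since $F_\xi$ is continuous near the quantiles, the conditional coverage then converges in probability to $F_\xi(q_{1-\alpha/2})-F_\xi(q_{\alpha/2})=1-\alpha$. Bounded convergence of this conditional probability gives the unconditional statement.

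The term $\hat\tau-\tau$ is immediate from (i) and the triangle inequality. For the quantiles, first note what the split does. Condition on the treatment indicators and on $\mathcal I_1$, so that $\hat\mu_t$ are fixed functions. By strong ignorability, units in $\mathcal T_2$ are i.i.d.\ draws from $X\mid T=1$, with $Y_i=\mu_1(X_i)+\varepsilon_{1,i}$. Since $\varepsilon_1\perp X$ and, by ignorability, $\varepsilon_1\perp T\mid X$, the residual $\varepsilon_{1,i}$ has the marginal law of $\varepsilon_1$. The same holds for controls with $\varepsilon_0$.

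Pairing distinct treated and control units makes the pairs independent across $k$. Within a pair the two units are different individuals, so $\varepsilon^{(1)}_{i_k}\perp\varepsilon^{(0)}_{j_k}$. Hence the oracle differences $W_k:=\varepsilon_{1,i_k}-\varepsilon_{0,j_k}$ are i.i.d. Their law is that of $\varepsilon_1-\varepsilon_0$ under conditional independence $Y(1)\perp Y(0)\mid X$, together with $\varepsilon_t\perp X$. This is exactly the law of $\xi$, and it is precisely where (ii) of the assumption is used.

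Strong overlap gives $|\mathcal T|,|\mathcal C|\ge$ a constant fraction of $n$ with probability tending to one. Combined with (iii), this gives $m_n\to\infty$ and also ensures the training sets grow, so that (i) is meaningful.

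Next compare $\hat W_k$ with $W_k$:
\[
|\hat W_k-W_k|\le \|\hat\mu_1-\mu_1\|_\infty+\|\hat\mu_0-\mu_0\|_\infty=:\delta_n=o_P(1),
\]
uniformly in $k$. Let $\hat F$ and $F_m$ be the empirical CDFs of $\{\hat W_k\}$ and $\{W_k\}$. Then $F_m(t-\delta_n)\le \hat F(t)\le F_m(t+\delta_n)$. Glivenko--Cantelli, applied conditionally on the first split, gives $\sup_t|F_m-F_\xi|\to0$ in probability, and continuity of $F_\xi$ absorbs the $\delta_n$ shift. So $\sup_t|\hat F(t)-F_\xi(t)|=o_P(1)$. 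Because the density is strictly positive near $q_{\alpha/2}$ and $q_{1-\alpha/2}$, these quantiles are unique points of strict increase, and uniform CDF convergence yields $\hat q_\beta\to q_\beta$ in probability for $\beta\in\{\alpha/2,1-\alpha/2\}$.

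The main obstacle is the independence bookkeeping. One must show that, conditionally on $\mathcal I_1$, the treatment labels and the random pairing, the $W_k$ are genuinely i.i.d.\ with law $F_\xi$ and independent of $(X_{n+1},\xi_{n+1})$. The splitting is stratified by treatment, so the conditioning must be done on $(T_1,\dots,T_n)$ first, after which the within-arm sampling is exchangeable. A second subtlety is that (i) is stated unconditionally. I would pass to conditional statements through a subsequence argument (almost-sure convergence along subsequences) or by showing that the conditional probability converges in probability and then applying dominated convergence.

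A final technical point is the error term $(\tau-\hat\tau)(X_{n+1})$. It is bounded uniformly in sup norm, so it needs no independence from $X_{n+1}$, and continuity of $F_\xi$ at the two quantiles finishes the argument.
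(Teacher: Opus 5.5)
Your proposal is correct and follows essentially the same route as the paper. The shared steps are: the oracle paired differences are i.i.d.\ with law $F_\xi$, the sandwich $F_m(t-\delta_n)\le \hat F(t)\le F_m(t+\delta_n)$ gives quantile consistency, the conditional coverage equals $F_\xi$ evaluated at the shifted estimated quantiles and averaged over $X_{n+1}$, and bounded convergence finishes. The paper instead uses DKW and the local density bounds to obtain the explicit rate $O_P(r_{n,0}+r_{n,1}+m_n^{-1/2})$, whereas your Glivenko--Cantelli and uniform-continuity version yields only the $o(1)$ needed here. One minor slip: the shift in your coverage event should be $(\hat\tau-\tau)(X_{n+1})$ rather than $(\tau-\hat\tau)(X_{n+1})$, which is harmless because only its sup norm enters.
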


Theorem~\ref{prop:positive_rate} shows that, under the structural assumptions above, the proposed interval is asymptotically valid without being as conservative as some other procedures, as will be seen in Section~\ref{sec:experiments}. Note that the condition in (i) of the theorem is quite mild and is satisfied by standard nonparametric estimators \citep{stone1982optimal,chen2015optimal}; the conditions in (ii) and (iii) are non-restrictive. The conditional independence in Assumption~\ref{ass:positive_structure} is most essential for our positive results. In Appendix~\ref{app:proof_positive_rate}, we further prove a stronger conditional version and show that its coverage error is governed by two terms: the first-stage regression error in estimating the conditional mean functions and the second-stage quantile estimation error in calibrating the paired residual differences. This decomposition highlights the role of the treatment-stratified splitting ratio $\rho_n$, which controls the tradeoff between these two sources of error. We also discuss this tradeoff in more detail under additional smoothness assumptions on $\mu_t$.

%% file: Sections/Other_methods.tex
Our impossibility results provide a new perspective on previous methods for ITE inference. In particular, they show that nontrivial prediction intervals cannot be obtained purely under strong ignorability and overlap alone. Existing procedures should therefore be understood as depending, explicitly or implicitly, on assumptions that restrict the problem sufficiently to permit informative intervals. Since some of the positive results on ITE inference do not seem to have stated all the conditions or subtle requirements explicitly, they may be interpreted as valid distribution-free predictive inference. We focus on two important results below for a complementary and nuanced understanding of the foundational matter.

\subsection{Lei and Cand\`es (2021): conformal inference for counterfactuals and ITEs}

We begin with \cite{lei2021conformal}, which develops a conformal approach to inferring ITEs by treating the missing counterfactual as the primary prediction target. For an in-study unit, one observes only
\[
Y_i=T_iY_i(1)+(1-T_i)Y_i(0),
\]
so only one of \(Y_i(1)\) and \(Y_i(0)\) is available. The method first constructs an interval \(\widehat C_t(x)\) for the missing potential outcome \(Y(t)\) at covariate value \(X=x\) for the group with $T_i=1-t$, and then converts these counterfactual intervals into an interval for \(\Delta_i=Y_i(1)-Y_i(0)\). This yields a concrete conformal framework for inferring counterfactuals, and hence ITEs, for units with exactly one observed potential outcome. The construction uses weighted conformal inference, with weights determined by the propensity score \(e(x)=\mathbb P(T=1\mid X=x)\). When the propensity score is known, as in completely randomized or stratified randomized experiments, the resulting procedure yields finite-sample marginal coverage for counterfactual outcomes, and therefore for the derived ITE intervals, without parametric assumptions on the outcome distribution. In observational studies, where \(e(x)\) must be estimated, the paper establishes a doubly robust approximate validity.

However, the problem changes qualitatively when one moves from in-study units to units outside the study: for a new unit, neither potential outcome is observed. In \cite{lei2021conformal}, the first response is a na\"ive Bonferroni construction based on the counterfactual intervals. Using the first-stage intervals \(\widehat C_1(x)\) and \(\widehat C_0(x)\), it forms the out-of-study ITE interval
\[
\widehat C_{\Delta}^{\mathrm{Bon}}(x)
:=
\widehat C_1(x)-\widehat C_0(x)
=
\{u-v:u\in \widehat C_1(x),\ v\in \widehat C_0(x)\}.
\]
If each counterfactual interval attains coverage level \(1-\alpha/2\), then a Bonferroni argument gives target coverage \(1-\alpha\) for \(\Delta\). But this guarantee already rests on the validity of the first-stage counterfactual intervals, which in \cite{lei2021conformal} are constructed by weighted conformal inference and therefore require either knowledge of, or sufficiently accurate estimation of, the propensity score \(e(x)\). In observational studies, estimating \(e(x)\) can itself be difficult, especially with high-dimensional covariates \citep{damour2021overlap,robins1997coda}. Even when the first-stage coverage is reliable, the direct Bonferroni combination is often highly conservative. To mitigate this, the authors propose nested approaches, including an inexact version that regresses the endpoints of in-study ITE intervals on the covariates and then applies the fitted maps to a new covariate value \(x\). As the authors note, however, this method remains inexact and does not provide a formal coverage guarantee for the ITE of a new unit outside the study. The following numerical example illustrates that there actually can be a systematic discrepancy in coverage regardless of the sample size.

\begin{example}
Suppose
\[
Y(1),Y(0)\stackrel{\mathrm{ind}}{\sim}N(0,1),\qquad T\sim \mathrm{Bernoulli}(1/2),\qquad X\perp (Y(1),Y(0),T).
\]
For an in-study unit, conditional on the observed potential outcome, the ITE has variance \(1\), so the exact \(90\%\) interval uses the \(N(0,1)\) quantile \(z_{0.95}\). Regressing the endpoints of these exact in-study intervals on \(X\) therefore yields, since \(X\) is uninformative, the constant interval \([-z_{0.95}, z_{0.95}]\) for a new unit. But for a new unit outside the study, \(\Delta=Y(1)-Y(0)\sim N(0,2)\), so this interval has coverage only \(2\Phi(z_{0.95}/\sqrt2)-1\approx 0.755\), well below \(0.9\).
\end{example}

In this sense, while the inexact method attempts to extend the first-stage construction, it may provide systematically unreliable results for ITE prediction outside the study. To address this lack of a guarantee, the authors further propose an exact nested method for out-of-study ITE prediction. The method applies a second conformalization step to the induced interval-valued data \((X_i,\widehat C_i)\), where \(\widehat C_i\) is the first-stage ITE interval for an in-study unit, thereby turning the nested construction from a heuristic transfer rule into a procedure with a rigorous coverage guarantee for a new unit.  

More specifically, letting \(C\) denote the interval-valued target for a new unit, the exact nested method intends to construct an interval-valued map \(\widehat{\mathcal C}(x)\) such that
\[
\mathbb P\bigl(C\subset \widehat{\mathcal C}(X)\bigr)\ge 1-\gamma.
\]
Combined with the first-stage guarantee, this would yield the guaranteed coverage for the ITE of a unit outside the study. The validity of the above exact nested method does not follow from overlap and strong ignorability alone: it also depends on sufficiently accurate estimation of the propensity score, and hence on additional conditions on the propensity score function. Moreover, even when these extra assumptions restore validity, both \cite{lei2021conformal} and our numerical evidence suggest that the method remains conservative, sometimes producing intervals longer than the naive Bonferroni construction. More broadly, this highlights the central difficulty of out-of-study ITE prediction: achieving validity without excessive conservativeness generally requires assumptions beyond overlap and strong ignorability alone. \looseness=-1

\subsection{Conformal meta-learners for ITE inference}

We briefly review the conformal meta-learner framework of \cite{alaa2023conformal}. At a high level, it combines a meta-learning stage that constructs an estimator $\hat{\tau}(x)$ from the potential-outcome regression functions, or related nuisance components, using a T-learner, S-learner, X-learner, or another plug-in approach \citep{kunzel2019metalearners}, with a conformal calibration stage that turns this estimate into an interval for the ITE at a new covariate value $x$ while remaining robust to misspecification of the nuisance learners.

A key technical ingredient in \cite{alaa2023conformal} is the comparison between an \emph{oracle} nonconformity score $V^*$, defined by the unobserved ITE $\Delta$, and an \emph{estimated} (pseudo) score $V_\varphi$ obtained by replacing $\Delta$ with a pseudo-outcome $\tilde Y$. The pseudo score is the quantity used for conformal calibration: it preserves the functional form of the oracle score while remaining computable from the observed data, and under suitable stochastic ordering conditions relative to the oracle score, \cite{alaa2023conformal} derive a coverage guarantee for the resulting conformal pseudo-intervals $\widehat C_\varphi(X_{n+1})$.

\paragraph{Theorem (Alaa et al., 2023).}
Assuming exchangeability of $(X_i,T_i,Y_i(0),Y_i(1))_{i=1}^{n+1}$, there exists a
distribution-dependent threshold $\alpha^\ast \in (0,1)$ such that the pseudo-interval
$\widehat C_\varphi(X_{n+1})$ satisfies
\[
\mathbb P\!\left( Y_{n+1}(1)-Y_{n+1}(0) \in \widehat C_\varphi(X_{n+1}) \right) \ge 1-\alpha,
\qquad \forall \alpha \in (0,\alpha^\ast),
\]
provided that at least one of the following stochastic ordering conditions holds:
(i) $V_\varphi \preceq_{(1)} V^\ast$,
(ii) $V_\varphi \preceq_{(2)} V^\ast$, and
(iii) $V_\varphi \preceq_{\mathrm{mcx}} V^\ast$.
Under condition (i), we have $\alpha^* =1$.

For completeness, the formal definitions of these three order relations are collected in Appendix~\ref{app:stochastic_orders}. Indeed, were the dominance to hold globally, it would yield nontrivial prediction intervals with distribution-free validity at any pre-specified level $\alpha$, contradicting the hardness results of Section~\ref{sec:main_results}.

\begin{corollary}
\label{cor:failure_fosd}
Suppose that, for all distributions $P \in \mathcal P_{\mathrm{SI}}^{\mathrm{str}}$, the pseudo score $V_\varphi$ satisfies
the first-order stochastic dominance
$V_\varphi \preceq_{(1)} V^\ast$.
Then the conformal pseudo-interval $\widehat C_\varphi$ satisfies
\[
\mathbb{P}_{P}\bigl(y \notin \widehat{C}_\varphi(X)\bigr) \leq \alpha, \quad \forall P \in \mathcal P_{\mathrm{SI}}^{\mathrm{str}},\ \forall y \in \mathbb{R}.
\]
Consequently, any such procedure must produce prediction sets with infinite expected length, i.e.,
\[
\sup_{P \in \mathcal P_{\mathrm{SI}}^{\mathrm{str}}}
\mathbb E_P\!\left[ \mathrm{length}\bigl(\widehat C_\varphi(X)\bigr) \right] = \infty .
\]
\end{corollary}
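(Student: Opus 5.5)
The argument reduces the corollary to the impossibility theorem for ITE prediction sets. We take as given the Alaa et al.\ theorem quoted above: under condition (i), i.e.\ $V_\varphi \preceq_{(1)} V^\ast$, the threshold is $\alpha^\ast=1$. So if first-order dominance holds for every $P\in\mathcal P_{\mathrm{SI}}^{\mathrm{str}}$, then for every such $P$ and every $\alpha\in(0,1)$,
\[
\mathbb P_P\bigl(\Delta\notin\widehat C_\varphi(X)\bigr)\le\alpha .
\]
Taking the supremum over $P$ shows that $\widehat C_\varphi$ (built from $\mathcal O_n$, at each $n$) attains finite-sample distribution-free validity at level $1-\alpha$ over $\mathcal P_{\mathrm{SI}}^{\mathrm{str}}$. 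Exchangeability holds automatically, since the units are i.i.d.\ draws from $P$.

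Next I would invoke the finite-sample impossibility result of Appendix~\ref{appendix:finite_impossibility}, the counterpart of Theorem~\ref{thm:asymptotic_impossibility_overlap}. It states that any finite-sample valid set satisfies $\mathbb P_P(y\notin\widehat C(X))\le\alpha$ for all $P$ and all $y\in\mathbb R$; this is exactly the first display of the corollary. If only the asymptotic theorem were available, the same conclusion would follow in the $\limsup$ form: a finite-sample valid sequence is trivially asymptotically valid, so Theorem~\ref{thm:asymptotic_impossibility_overlap} applies. The finite-sample form holds pointwise in $n$ precisely because the impossibility argument uses only the coverage constraint at the given $n$. The mechanism is to construct $Q\in\mathcal P_{\mathrm{SI}}^{\mathrm{str}}$ whose observed-data law is close in total variation to that of $P$, and whose conditional law of $\Delta$ concentrates near $y$ (through the unobserved coupling and hard-to-detect oscillations in $X$). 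Since $\widehat C$ depends only on observed data, its coverage of $y$ under $P$ must match its coverage of $\Delta$ under $Q$.

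For the length statement, I would use Fubini:
\[
\mathbb E_P[\mathrm{length}(\widehat C_\varphi(X))]=\int_{\mathbb R}\mathbb P_P\bigl(y\in\widehat C_\varphi(X)\bigr)\,dy\ \ge\ \int_{\mathbb R}(1-\alpha)\,dy=\infty .
\]
This gives infinite expected length already for each fixed $P$, and hence certainly for the supremum over $\mathcal P_{\mathrm{SI}}^{\mathrm{str}}$. Joint measurability of $\{(y,\omega):y\in\widehat C_\varphi(X)\}$ is needed here and is assumed, as in the main theorem.

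The main obstacle is making sure the impossibility theorem applies to $\widehat C_\varphi$ exactly as stated. The theorem quantifies over $P\in\mathcal P_{\mathrm{SI}}$, while the hypothesis gives validity only over $\mathcal P_{\mathrm{SI}}^{\mathrm{str}}$; since the corollary's conclusion is restricted to $\mathcal P_{\mathrm{SI}}^{\mathrm{str}}\subset\mathcal P_{\mathrm{SI}}$, this mismatch is harmless. The other point to check is that $\widehat C_\varphi$ is a legitimate (possibly randomized) measurable function of $\mathcal O_n$ and $X$ alone. It is: the pseudo-outcomes $\tilde Y$ and the fitted nuisances are computed from observed data only. Once these points are verified, the corollary is a direct consequence of the impossibility result.
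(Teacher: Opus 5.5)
Your proposal is correct and follows the paper's own route: condition (i) gives $\alpha^\ast=1$, hence finite-sample validity of $\widehat C_\varphi$ over $\mathcal P_{\mathrm{SI}}^{\mathrm{str}}$; Theorem~\ref{thm:impossibility_overlap} then gives $\mathbb P_P(y\notin\widehat C_\varphi(X))\le\alpha$ for every $y$, and Fubini gives infinite expected length. Two clarifications: validity over $\mathcal P_{\mathrm{SI}}^{\mathrm{str}}$ alone suffices because Theorem~\ref{thm:impossibility_overlap} only \emph{assumes} validity over that subclass (Lemma~\ref{lem:denseness_overlap_subclass} does the work), not because the conclusion is restricted; and the paper's mechanism uses a $Q\in\mathcal E$ outside the null, with exactly the observed-data law of $P$ and $\Delta\equiv y$, together with Lemma~\ref{lem:nofreelunch}, rather than a null $Q$ under which $\Delta$ merely concentrates near $y$ (your heuristic is not needed, since you invoke the theorem as a black box).
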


This observation bears directly on distribution-dependent validity thresholds, denoted by $\alpha^\ast$. For a fixed data-generating distribution $P$, one may define a maximal level $\alpha^\ast(P)$ below which coverage holds, but our results show that $\alpha^\ast(P)$ cannot be bounded uniformly away from zero over the model class; in particular, there exists a sequence of distributions $\{P_n\}$ such that $\alpha^\ast(P_n)\to 0$. \looseness = -1

\begin{corollary}
\label{cor:failure_alpha_star}
Suppose that, for all distributions $P \in \mathcal P_{\mathrm{SI}}^{\mathrm{str}}$, the distribution-dependent validity thresholds $\alpha^*(P)$ satisfy
\[
\alpha^*(P) \ge \epsilon >0.
\]
Then for $0 < \alpha \le \epsilon$, the conformal pseudo-interval $\widehat C_\varphi$ satisfies
\[
\mathbb{P}_{P}\bigl(y \notin \widehat{C}_\varphi(X)\bigr) \leq \alpha, \quad \forall P \in \mathcal P_{\mathrm{SI}}^{\mathrm{str}},\ \forall y \in \mathbb{R}.
\]
Consequently, any such procedure must produce prediction sets with infinite expected length, i.e.,
\[
\sup_{P \in \mathcal P_{\mathrm{SI}}^{\mathrm{str}}}
\mathbb E_P\!\left[ \mathrm{length}\bigl(\widehat C_\varphi(X)\bigr) \right] = \infty .
\]
\end{corollary}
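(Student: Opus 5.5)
The plan is to reduce Corollary~\ref{cor:failure_alpha_star} to the finite-sample impossibility result in Appendix~\ref{appendix:finite_impossibility}, the counterpart of Theorem~\ref{thm:asymptotic_impossibility_overlap}. The corollary then follows from two observations: uniform validity over $\mathcal P_{\mathrm{SI}}^{\mathrm{str}}$ is exactly the hypothesis of that theorem, and triviality (covering every fixed $y$) forces infinite expected length.

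\textbf{Step 1 (uniform validity).} Fix $0<\alpha\le\epsilon$. By hypothesis, $\alpha^*(P)\ge\epsilon\ge\alpha$ for every $P\in\mathcal P_{\mathrm{SI}}^{\mathrm{str}}$. The conformal meta-learner guarantee of \cite{alaa2023conformal} holds for all $\alpha\in(0,\alpha^*(P))$, so it applies at level $\alpha$ for every $P$.

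At the boundary case $\alpha=\epsilon=\alpha^*(P)$ the guarantee is stated only on the open interval. I would handle this by continuity: coverage at levels $\alpha'\uparrow\alpha$ gives $\mathbb P_P(\Delta\notin\widehat C_{\varphi,\alpha'}(X))\le\alpha'$. I would then either use that the threshold is interpreted inclusively, or note that the conclusion only needs levels strictly below $\epsilon$ and take limits. Either way we get $\sup_{P\in\mathcal P_{\mathrm{SI}}^{\mathrm{str}}}\mathbb P_P(\Delta\notin\widehat C_\varphi(X))\le\alpha$. This is exactly finite-sample distribution-free validity at level $1-\alpha$.

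\textbf{Step 2 (invoke impossibility).} Apply the finite-sample impossibility theorem of Appendix~\ref{appendix:finite_impossibility}. Its conclusion mirrors Theorem~\ref{thm:asymptotic_impossibility_overlap} without the limsup, and gives $\mathbb P_P(y\notin\widehat C_\varphi(X))\le\alpha$ for all $P$ in the class and all $y\in\mathbb R$.

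For intuition on why that theorem holds, and as the fallback if I had to reprove it, the argument runs as follows. Take any $P$ and target $y$. Construct $Q\in\mathcal P_{\mathrm{SI}}^{\mathrm{str}}$ whose observed-data law $(X,Y,T)$ is close in total variation to that of $P$ on $n$ samples, while $\Delta=y$ with high probability under $Q$. This uses the missing-data freedom in the joint law of $(Y(1),Y(0))$ together with the \cite{10.1214/19-AOS1857} construction, which makes a continuous $X$ hide a dependence that no test can detect. Then transfer coverage from $Q$ to $P$.

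\textbf{Step 3 (infinite length).} Given coverage of every fixed $y$ with probability at least $1-\alpha$, apply Fubini on $[-M,M]$:
\[
\mathbb E_P\bigl[\mathrm{length}(\widehat C_\varphi(X)\cap[-M,M])\bigr]=\int_{-M}^{M}\mathbb P_P\bigl(y\in\widehat C_\varphi(X)\bigr)\,dy\ge 2M(1-\alpha).
\]
Letting $M\to\infty$ gives infinite expected length for every $P$, hence in particular the supremum is infinite.

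The main obstacle is Step 1's bookkeeping: making sure the Alaa et al.\ guarantee applies to the same (possibly randomized) set at exactly level $\alpha$, uniformly over the class including the endpoint $\alpha=\epsilon$. The deep content is already contained in the cited impossibility theorem, and Steps 2 and 3 are routine once it is available.
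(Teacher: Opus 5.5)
Your proposal is correct and follows the paper's (implicit) argument. The hypothesis $\alpha^*(P)\ge\epsilon$ turns the Alaa et al.\ guarantee into finite-sample validity over $\mathcal P_{\mathrm{SI}}^{\mathrm{str}}$; Theorem~\ref{thm:impossibility_overlap} (through Lemma~\ref{lem:denseness_overlap_subclass}, Lemma~\ref{lem:nofreelunch} and the $\Delta\equiv y$ construction) then gives coverage of every fixed $y$, and Tonelli gives infinite expected length. At the endpoint $\alpha=\epsilon$, use the inclusive reading of the threshold, as the paper tacitly does: your limiting alternative only controls the larger sets at levels $\alpha'<\alpha$, and this does not pass to the possibly strictly smaller set at level $\alpha$, because conformal order-statistic thresholds need not be continuous in the level.
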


From a practical perspective, this sharply limits the usefulness of guarantees of the form “valid for all $\alpha\le\alpha^\ast$.” In applications, users must choose a nominal level $\alpha$ without knowing the distribution-dependent threshold $\alpha^\ast(P)$; our simulations show that even in seemingly benign settings, this effective validity threshold may be arbitrarily close to 0.

\begin{figure}[t]
\centering
\includegraphics[width=.85\textwidth]{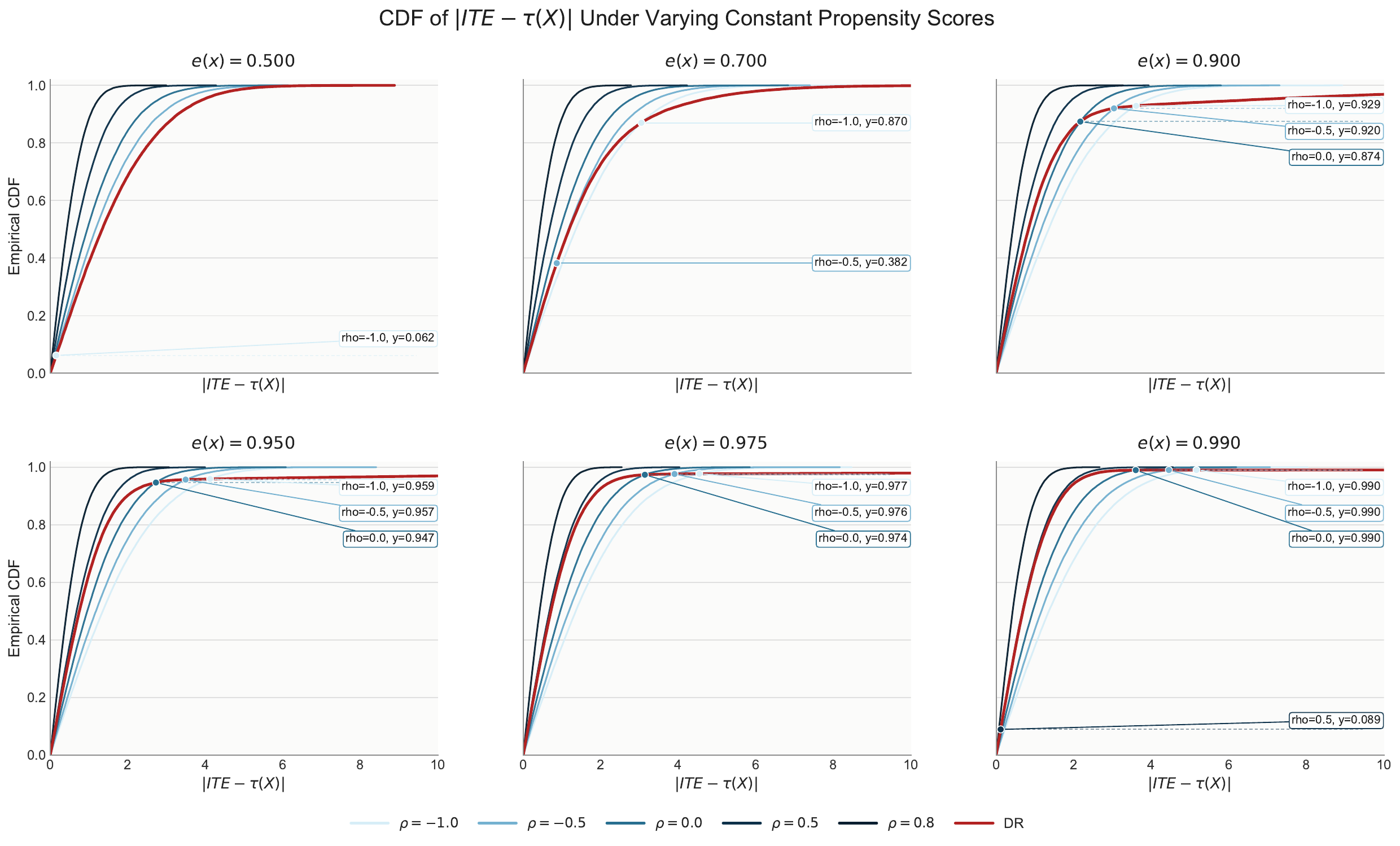}
\caption{CDFs of the DR pseudo score and the true score in a simple setting with known regression functions and known constant propensity scores. The intersections vary substantially across distributions, and the effective validity threshold \(\alpha^\ast\) can be arbitrarily close to \(0\).}
\label{fig:dr_score_cdf_constant_propensity}
\end{figure}

To illustrate this point, Figure~\ref{fig:dr_score_cdf_constant_propensity} considers a simple setting with known regression functions and propensity score, and compares the cdf of the DR pseudo score with that of the corresponding true score across different values of the dependence parameter \(\rho\) and different constant propensity scores. Even in this favorable setting, the pseudo score does not uniformly dominate the true score: the two cdfs intersect substantially across distributions and, in some cases, only at very small tail probabilities. Equivalently, the admissible target level, or validity threshold \(\alpha^\ast\), can be arbitrarily small and may approach \(0\). This is fully consistent with Corollary~\ref{cor:failure_fosd} and Corollary~\ref{cor:failure_alpha_star}: the target level is inherently distribution-dependent, so no nontrivial \(\alpha\) can be chosen uniformly over the whole model class.

%% file: Sections/Experiments.tex
In this section, we conduct simulation studies to illustrate the practical implications of our theoretical results. We compare several representative methods for predictive inference on individual treatment effects (ITEs), with a focus on empirical coverage and interval length. 

Our simulations are based on the design of \citep{lei2021conformal}, which in turn is adapted from \citep{wager2018estimation}, with a modification tailored to ITE inference. Let \(X=(X_1,X_2)^\top\), where
\[
X_j=\Phi(X_j'), \qquad j=1,2,
\]
\(\Phi\) is the standard normal cdf, and \((X_1',X_2')\) is a bivariate Gaussian vector with mean zero and marginal variance one. We consider both independent and correlated covariates, as in \citep{lei2021conformal}.

The treated potential outcome is generated from
\[
\mathbb{E}[Y(1)\mid X]=f(X_1)f(X_2), \qquad
f(x)=\frac{2}{1+\exp\{-12(x-0.5)\}},
\]
and we set
\[
Y(1)=f(X_1)f(X_2)+\sigma(X)\varepsilon_1, \qquad
Y(0)=\sigma(X)\varepsilon_0.
\]
We consider
\[
\mathrm{Corr}(\varepsilon_0,\varepsilon_1)\in\{-1,-0.5,0\},
\]
to allow $\varepsilon_0$ and $\varepsilon_1$ to be dependent in the first two correlation cases. We focus on negative correlations because they yield a more challenging setting for ITE prediction.

For treatment assignment, we consider two propensity score regimes. In the covariate-dependent setting, following \citep{lei2021conformal}, we set
\[
e(X)=\mathbb{P}(T=1\mid X)=\frac{1}{4}\bigl(1+\beta_{2,4}(X_1)\bigr),
\]
where \(\beta_{2,4}\) is the cdf of the \(\mathrm{Beta}(2,4)\) distribution. In the constant-propensity setting, we take
\[
e(X)=p, \qquad p\in\{0.1,0.3,0.5,0.7,0.9\}.
\]
Together, these regimes let us examine performance under both covariate-dependent treatment assignment and varying degrees of treatment imbalance. In the implementations where the propensity score is treated as unknown, we estimate \(e(X)\) using a boosting-based learner rather than plugging in the oracle propensity score. Additional implementation details are provided in Appendix~\ref{app:cp-ite-simulation-details}.

Unlike \citep{lei2021conformal}, we do not set \(Y(0)=0\). This is because ITE inference is fundamentally different from CATE estimation or counterfactual prediction: one cannot in general assume that the control potential outcome is non-random or independent of \(Y(1)\). Allowing \(Y(0)\) to remain random and correlated with \(Y(1)\) therefore gives a more realistic setting for inferring \(\tau=Y(1)-Y(0)\).

\begin{figure}[!t]
\centering
\includegraphics[width=.85\textwidth]{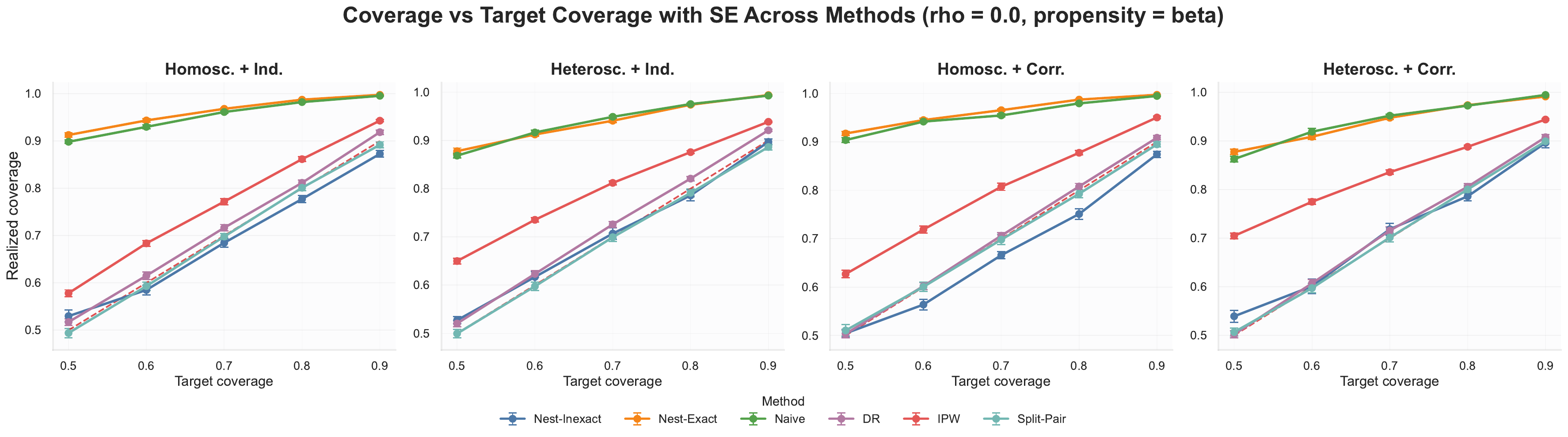}
\vspace{-0.5em}
\includegraphics[width=.85\textwidth]{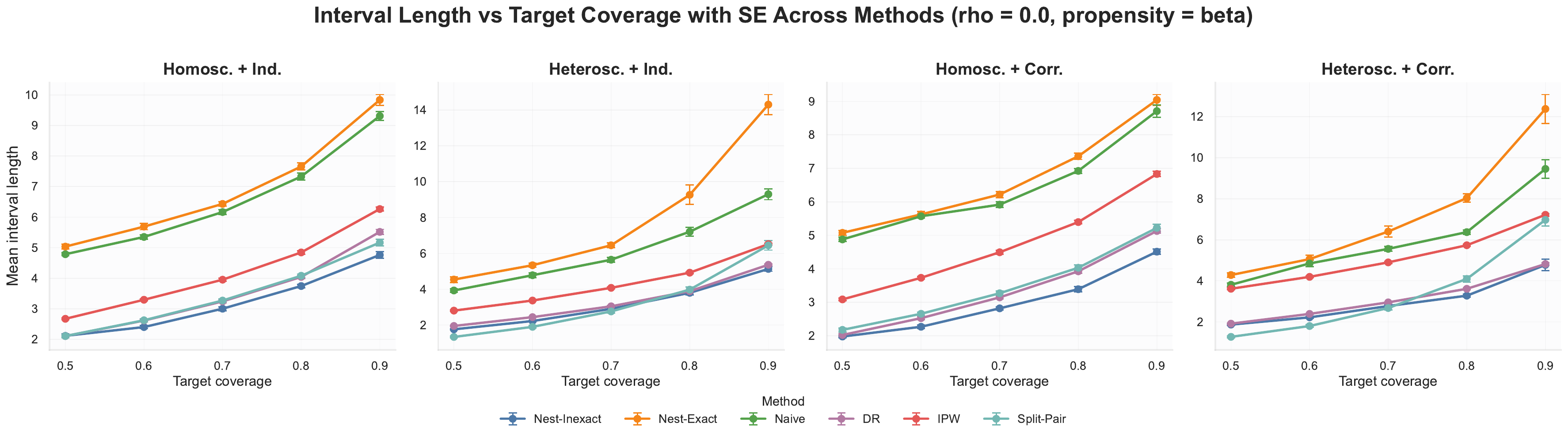}
\caption{Top: realized coverage versus nominal target coverage under the beta propensity score setting with independent errors \(\varepsilon_0\) and \(\varepsilon_1\). The red dashed line is the identity line. Curves below it fail to attain the target coverage. Bottom: average interval length versus nominal target coverage under the same setting.}
\label{fig:target_beta_rho0}
\end{figure}

We begin with the beta propensity score setting \(e(X)=\frac{1}{4}(1+\beta_{2,4}(X_1))\) and independent errors \(\varepsilon_0\) and \(\varepsilon_1\). Figure~\ref{fig:target_beta_rho0} reports realized coverage, together with the corresponding average interval length, for six methods: Naive, Nest-Exact, and Nest-Inexact from \citep{lei2021conformal}; DR and IPW from \citep{alaa2023conformal}; and our Split-Pair method. Several features stand out. Nest-Inexact does not consistently attain the nominal target level, and this failure becomes more pronounced in later settings. Although Nest-Exact enjoys a formal guarantee, it is highly conservative in practice, often even more so than the Naive method. DR and IPW appear somewhat better calibrated in this particular setting, but they remain generally conservative and, as the later experiments show, this behavior is not uniform across settings. By contrast, Split-Pair attains the target coverage level across all four scenarios in this setting. 

We next consider the same beta propensity score setting but allow dependence between the two potential outcomes. Figure~\ref{fig:coverage_beta_rho_neg05} reports the empirical coverage when \(\mathrm{Corr}(\varepsilon_0,\varepsilon_1)=-0.5\). In this setting, Nest-Inexact no longer attains the target level \(0.9\), and this agrees with the example given earlier. We also see that Split-Pair fails to attain the target in this case. This is expected, since \(Y(1)\) and \(Y(0)\) are now correlated, so the setting no longer satisfies the applicability condition of our method. \looseness=-1

\begin{figure}[!b]
\centering
\includegraphics[width=.85\textwidth]{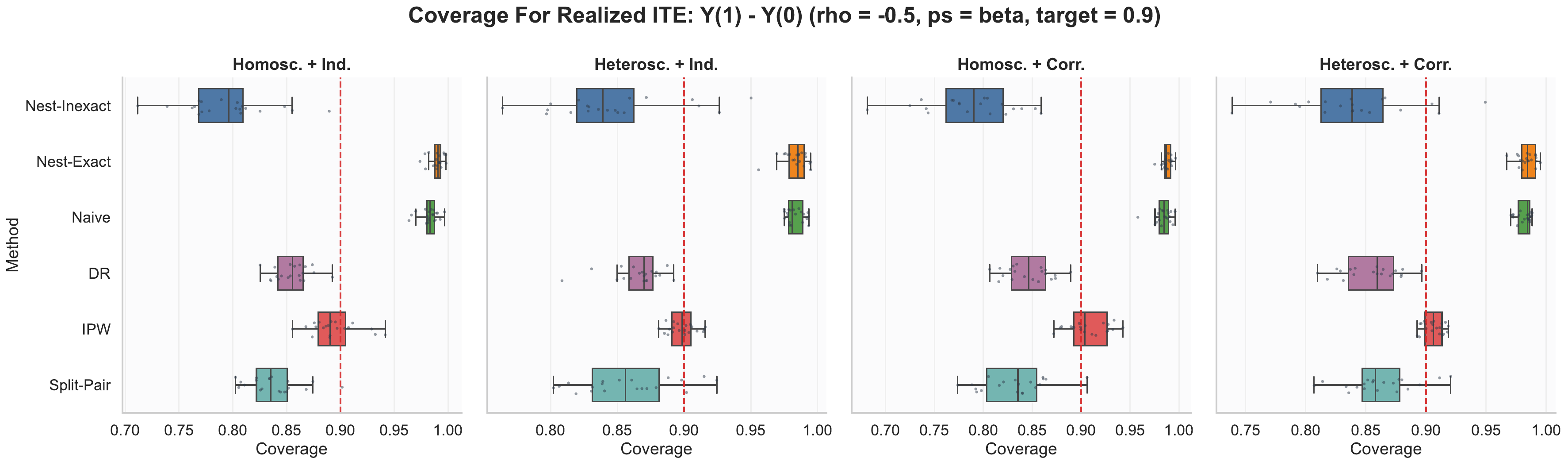}
\caption{Empirical coverage of realized ITE intervals under the beta propensity score setting with \(\mathrm{Corr}(\varepsilon_0,\varepsilon_1)=-0.5\). The red dashed line marks the nominal coverage level \(0.9\).}
\label{fig:coverage_beta_rho_neg05}
\end{figure}

Figure~\ref{fig:target_const09_rhoneg1} considers a more challenging setting with \(\mathrm{Corr}(\varepsilon_0,\varepsilon_1)=-1\) and constant propensity score \(e(X)=0.9\), and plots realized coverage against the nominal target level. In this regime, Nest-Inexact, DR, and IPW all fail to track the target level closely, with realized coverage remaining well below the diagonal over a wide range of targets. This is consistent with Corollary~\ref{cor:failure_fosd} and Corollary~\ref{cor:failure_alpha_star}: for conformal meta-learners, the nominal target level cannot be chosen uniformly over the model class, since validity necessarily depends on the underlying distribution and no single nontrivial \(\alpha\) yields a distribution-free guarantee across all data-generating processes. Thus, even when DR and IPW appear reasonably calibrated in some settings, this behavior cannot persist uniformly.

We also study a substantially harder stress-test design, described in detail in Appendix~\ref{app:checkerboard_stress_test}. In this design, the covariate space is partitioned into a fine checkerboard, and the propensity score alternates sharply between high- and low-probability cells. This makes estimation of \(e(X)\) by default learners much more difficult in finite samples. Figure~\ref{fig:checkerboard_hard_setting} shows that in this regime all methods fall materially below the identity line across the target levels we consider. Most notably, the conservative behavior of Naive and Nest-Exact seen in the earlier experiments no longer persists: once the propensity structure becomes sufficiently difficult, even these methods fail to attain the nominal target coverage. This also highlights the difficulty of distribution-free predictive inference for individual treatment effects.

\begin{figure}[t]
\centering
\includegraphics[width=.85\textwidth]{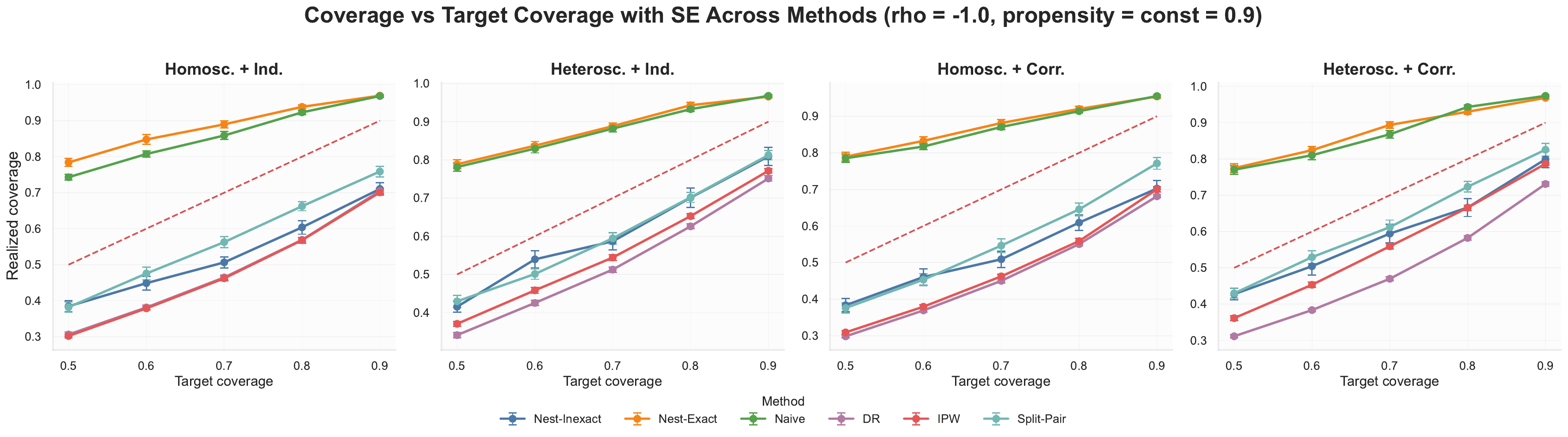}
\caption{Realized coverage versus nominal target coverage when \(\mathrm{Corr}(\varepsilon_0,\varepsilon_1)=-1\) and \(e(X)=0.9\). The red dashed line is the identity line. Curves below it fail to attain the target coverage.}
\label{fig:target_const09_rhoneg1}
\end{figure}

\begin{figure}[b]
\centering
\includegraphics[width=.85\textwidth]{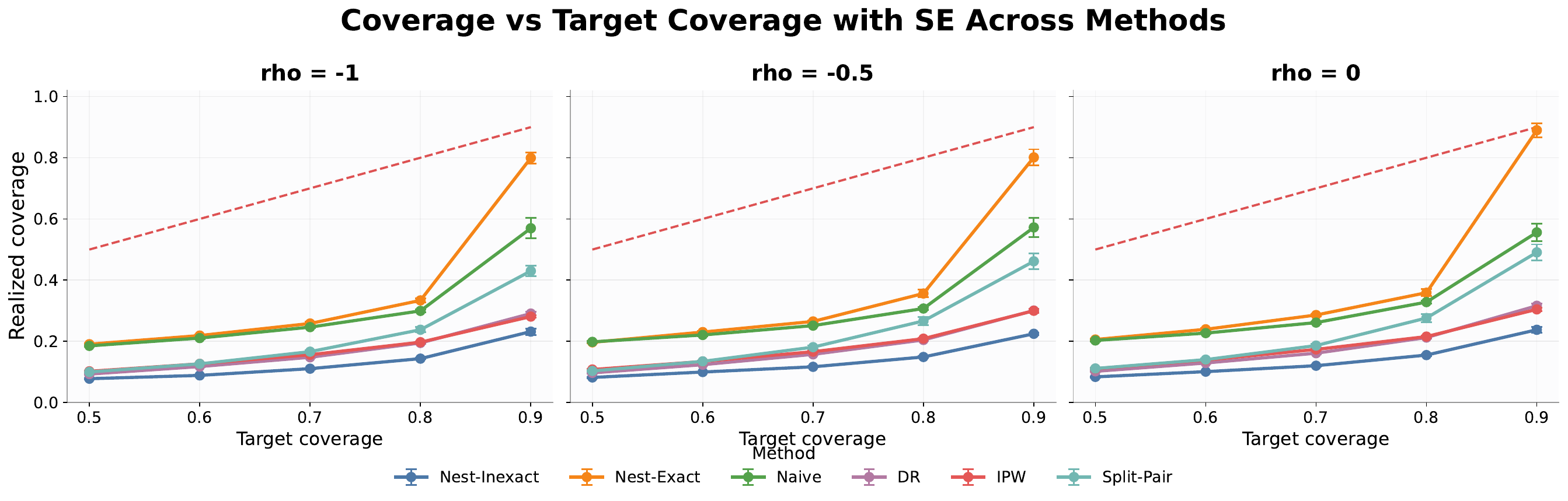}
\caption{Realized coverage versus nominal target coverage in the checkerboard stress-test design. The three panels correspond to \(\rho\in\{-1,-0.5,0\}\). The red dashed line is the identity line.}
\label{fig:checkerboard_hard_setting}
\end{figure}

%% file: Sections/Conclusions.tex
Distribution-free predictive inference has achieved tremendous success in classical regression and related contexts. In this paper, we studied the feasibility of distribution-free predictive inference for ITEs under the standard causal assumptions of strong ignorability and overlap in the presence of continuous covariates. Our main conclusion is unfortunately negative: any prediction interval that attains valid distribution-free coverage for the ITE must be trivial, in the sense of having infinite expected length, and this impossibility holds both in finite samples and asymptotically. Our analysis provides a fresh perspective on existing methods by clarifying that informative conformal procedures for ITE inference must rely on additional structure beyond the standard causal framework. Under additional assumptions, we also obtain a positive result by proposing the Split-Pair method, which yields informative predictive inference for ITEs. Our simulation studies reinforce this point, showing that methods designed to produce shorter intervals may suffer from undercoverage, whereas methods with stronger formal guarantees often become markedly conservative. Taken together, these results show that the difficulty of distribution-free ITE inference is intrinsic to the missing-data nature of the problem. \looseness=-1

%% file: Sections/Proofs.tex

We provide the proofs of the main theorems in Section~\ref{sec:main_results} in the Appendix. The Appendix is organized as follows: 
\begin{itemize}
    \item[A.1.] we state technical lemmas on the hardness of conditional independence testing with continuous conditioning variables;
    \item[A.2.] we state the impossibility results over $\mathcal{P}_{\textup{SI}}$, which serves as the foundation for the proofs of Theorems~\ref{thm:impossibility_overlap} and~\ref{thm:asymptotic_impossibility_overlap};
    \item[A.3.] we present the proofs of Theorems~\ref{thm:impossibility_overlap} and~\ref{thm:asymptotic_impossibility_overlap} based on the results in A.2;
    \item[A.4.] we present the proofs of the results stated in A.2;
    \item[A.5.] we present the proof of Theorem~\ref{prop:positive_rate} and further discussion on the splitting ratio.
\end{itemize}
\subsection{Technical Lemmas}

We begin by stating two auxiliary results that formalize the impossibility of constructing
uniformly powerful tests for conditional independence. These results are adapted from
Theorem~2 and Corollary~3 of \cite{10.1214/19-AOS1857} to our setting.

Let \(\mathcal D:=\mathbb R^p\times\mathbb R^2\times\{0,1\}\) denote the full-data sample space. Throughout this subsection, a randomized test is a measurable map \(\psi_n:\mathcal D^n\to[0,1]\), interpreted as the conditional rejection probability given the sample.

\begin{lemma}
\label{lem:nofreelunch}
Consider any (possibly randomized) test
\[
\psi_n : \mathcal D^n \to [0,1]
\]
based on the full data
$\overline{\mathcal O}_n=\{(X_i,Y_i(1),Y_i(0),T_i)\}_{i=1}^n$.
If
\[
\sup_{P\in\mathcal P_{\mathrm{SI}}} \mathbb E_P[\psi_n(\overline{\mathcal O}_n)] \le \alpha
\quad\text{for some } \alpha\in(0,1),
\]
then it necessarily holds that
\[
\sup_{Q\in\mathcal E} \mathbb E_Q[\psi_n(\overline{\mathcal O}_n)] \le \alpha.
\]
\end{lemma}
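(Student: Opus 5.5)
The plan follows the no-free-lunch argument of \cite{10.1214/19-AOS1857}. Fix $Q\in\mathcal E$ and a test $\psi_n$ that has size $\alpha$ over $\mathcal P_{\mathrm{SI}}$. I will build a sequence $P_m\in\mathcal P_{\mathrm{SI}}$ with $\mathbb E_{P_m}[\psi_n]\to\mathbb E_Q[\psi_n]$. Since $\mathbb E_{P_m}[\psi_n]\le\alpha$ for every $m$, this gives $\mathbb E_Q[\psi_n]\le\alpha$. Taking the supremum over $Q$ then proves the lemma.

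The construction rests on the following observation. If $T$ is a measurable function of $X$ under $P$, then the conditional law of $T$ given $X$ is degenerate. Hence $(Y(1),Y(0))\perp T\mid X$ holds automatically, whatever the joint law of the potential outcomes. Let $X_1$ be the coordinate of $X$ that is absolutely continuous under $Q$. Draw $(X,Y(1),Y(0),T)\sim Q$ and an independent $U\sim\mathrm{Unif}(0,1)$. Define $X^{(m)}$ by replacing $X_1$ with
\[
X_1^{(m)}:=2^{-m}\lfloor 2^mX_1\rfloor+2^{-m}\,\tfrac{T+U}{2},
\]
and leave the other coordinates and $(Y(1),Y(0),T)$ unchanged. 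Call the resulting law $P_m$. Under $P_m$ the following hold:
\begin{enumerate}
\item[(a)] $T=\mathbf 1\{\text{the fractional position of }X_1^{(m)}\text{ in its dyadic cell exceeds }1/2\}$ is a function of $X^{(m)}$, so $P_m\in\mathcal P_{\mathrm{SI}}$;
\item[(b)] $X_1^{(m)}$ is absolutely continuous, and $Y(1),Y(0)$ keep their continuous marginals, so $P_m\in\mathcal E$;
\item[(c)] $|X_1^{(m)}-X_1|\le 2^{-m}$ surely.
\end{enumerate}
Coupling the $n$ samples coordinatewise, the full-data samples drawn from $P_m^n$ and $Q^n$ then differ only in the $X_1$ coordinates, each by at most $2^{-m}$.

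The main obstacle is that $\psi_n$ is merely measurable, so a small perturbation of the data need not produce a small change in $\mathbb E[\psi_n]$. I will handle this by approximation.

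First, by Lusin's theorem or density of $C_c$ in $L^1$, for each $\epsilon>0$ pick a continuous, compactly supported $g:\mathcal D^n\to[0,1]$ with $\|\psi_n-g\|_{L^1(\mu)}\le\epsilon$. Here $\mu$ is a suitable dominating measure: Lebesgue measure in the $X_1$ coordinates times $Q$'s conditional law of the remaining coordinates.

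Second, $g$ is uniformly continuous and the coupling moves points by at most $2^{-m}$. Dominated convergence therefore gives $|\mathbb E_{P_m}g-\mathbb E_Qg|\to0$.

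Third, I need to control $\|\psi_n-g\|_{L^1(P_m^n)}$, and for this I need a uniform density bound. The conditional density of $X_1^{(m)}$ given the other variables is the cell-average of the $Q$-density of $X_1$, multiplied by at most $2$. If $Q$'s conditional density is bounded, this yields $dP_m^n/d\mu\le 2^n\|q\|_\infty^n$ uniformly in $m$. Otherwise, I first truncate $Q$ to a set where the density is bounded; this costs at most $\epsilon$ in total variation. I expect this third step to need the most care, in particular the measurability of $T$ and the other coordinates when forming the cell averages. Because the bound does not depend on $m$, it gives $\|\psi_n-g\|_{L^1(P_m^n)}\lesssim\epsilon$.

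Putting the three bounds together,
\[
\mathbb E_Q\psi_n\le \limsup_{m\to\infty}\mathbb E_{P_m}\psi_n+O(\epsilon)\le\alpha+O(\epsilon).
\]
Letting $\epsilon\downarrow0$ concludes the proof.
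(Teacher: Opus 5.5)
\textbf{Where the proof breaks: the density bound in your third step.} Two things there need the conditional law of the absolutely continuous coordinate (your $X_1$), given the remaining coordinates $W=(X_{-1},Y(1),Y(0),T)$, to have a Lebesgue density under $Q$: the uniform bound $dP_m^{\otimes n}/d\mu\le C$, and even $Q\ll\mu$. Your truncation only helps once such a density exists.

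Membership in $\mathcal E$ does not provide this. It only makes the \emph{marginal} of one coordinate absolutely continuous, so that coordinate can be tied to $(Y(1),Y(0),T)$ by an exact relation. Your perturbation of $X_1$ destroys such relations.

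A concrete example: take $p=1$, $X\sim U(0,1)$, $T\sim\mathrm{Bernoulli}(1/2)$ independent of $X$, $Y(0)=X+T$, and $Y(1)\sim N(0,1)$ independent of $(X,T)$. Then $Q\in\mathcal E\setminus\mathcal P_{\mathrm{SI}}$. For $\psi_n=\mathbf 1\{Y_1(0)-X_1\in\{0,1\}\}$ (first unit), $\mathbb E_Q\psi_n=1$. But $\mathbb E_{P_m}\psi_n=0$ for every $m$, because given $(X,T)$ your $X^{(m)}$ is uniform on an interval.

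Your second and third steps never use the level-$\alpha$ property. If they were valid, they would prove $\mathbb E_{P_m}\psi_n\to\mathbb E_Q\psi_n$ for every measurable $\psi_n$, and this example shows that is false. It is not a counterexample to the lemma itself: this $\psi_n$ is not level $\alpha$, as $T=\mathbf 1\{X>1/2\}$, $Y(0)=X+T$ shows. What it does show is that your sequence $P_m$ cannot certify the lemma for such $Q$, and no choice of $\mu$ can give the uniform density bound.

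Your argument is essentially the Shah--Peters argument that the paper simply cites. It is sound under their hypothesis that the joint law has a Lebesgue density. The class $\mathcal E$ used here is strictly larger than that.

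\textbf{A construction that covers all of $\mathcal E$.} Because $T$ is binary and $\mathcal E$ permits $T$ to be a deterministic function of $X$, you can use a \emph{random} encoding instead of a deterministic one. This also makes the Lusin step unnecessary.

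Write $X^{\circ}$ for the absolutely continuous coordinate and $X_i^{\circ}$ for its value in unit $i$. Let $U=(U_k)_{k\in\mathbb Z}$ be i.i.d.\ $\mathrm{Unif}(0,1)$, and set $f_U(x)=\mathbf 1\{U_{\lfloor 2^m x^{\circ}\rfloor}<e_Q(x)\}$. Let $P_U$ be the law given by $X\sim Q_X$, $T=f_U(X)$, and $(Y(1),Y(0))\mid X,T\sim Q_{(Y(1),Y(0))\mid X,T}$.

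Each $P_U$ lies in $\mathcal P_{\mathrm{SI}}$, since $T$ is a function of $X$. The marginal requirements of $\mathcal E$ are inherited: $f_U=1$ only where $e_Q>0$ and $f_U=0$ only where $e_Q<1$, so the outcome marginals stay continuous. Hence $\int \mathbb E_{P_U}[\psi_n]\,d\mathbb P(U)\le\alpha$.

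The mixture $\int P_U^{\otimes n}\,d\mathbb P(U)$ can be coupled with $Q^{\otimes n}$ so that the two samples coincide on the event that $X^{\circ}_1,\dots,X^{\circ}_n$ fall in distinct dyadic cells of width $2^{-m}$. On that event the variables $U_{\lfloor 2^mX_i^{\circ}\rfloor}$ are i.i.d.\ uniform and independent of the covariates. The total variation distance is therefore at most $\binom n2\max_k Q\bigl(X^{\circ}\in[k2^{-m},(k+1)2^{-m})\bigr)$, which tends to $0$ using only that $X^{\circ}$ is atomless. This gives $\mathbb E_Q\psi_n\le\alpha$ for every measurable $\psi_n$, with no density assumption and no approximation of $\psi_n$.
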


\begin{lemma}
\label{lem:asymp_nofreelunch}
Let $\{\psi_n\}_{n\ge1}$ be a sequence of (possibly randomized) tests with
$\psi_n : \mathcal D^n \to [0,1]$.
Then
\[
\sup_{Q\in\mathcal E}
\limsup_{n\to\infty} \mathbb E_Q[\psi_n(\overline{\mathcal O}_n)]
\;\le\;
\limsup_{n\to\infty}
\sup_{P\in\mathcal P_{\mathrm{SI}}} \mathbb E_P[\psi_n(\overline{\mathcal O}_n)].
\]
\end{lemma}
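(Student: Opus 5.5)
The plan is to reduce Lemma~\ref{lem:asymp_nofreelunch} to its finite-sample counterpart, Lemma~\ref{lem:nofreelunch}, applied separately for each fixed $n$ at the level that $\psi_n$ actually attains. For each $n$, define
\[
\alpha_n := \sup_{P\in\mathcal P_{\mathrm{SI}}} \mathbb E_P[\psi_n(\overline{\mathcal O}_n)] \in [0,1],
\]
so the right-hand side of the claim equals $\limsup_{n\to\infty}\alpha_n$. The key step is to show that for every $n$,
\[
\sup_{Q\in\mathcal E}\mathbb E_Q[\psi_n(\overline{\mathcal O}_n)]\le \alpha_n .
\]
Once this holds, fix any $Q\in\mathcal E$. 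Then $\mathbb E_Q[\psi_n]\le\alpha_n$ for all $n$, hence $\limsup_n \mathbb E_Q[\psi_n]\le\limsup_n\alpha_n$. Taking the supremum over $Q$ on the left gives the claim, because the right-hand side does not depend on $Q$.

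The only care needed is that Lemma~\ref{lem:nofreelunch} is stated for $\alpha\in(0,1)$, so I would treat the boundary values of $\alpha_n$ separately.
\begin{itemize}
\item If $\alpha_n\in(0,1)$, I apply Lemma~\ref{lem:nofreelunch} directly with $\alpha=\alpha_n$.
\item If $\alpha_n=1$, the bound is trivial, since $\psi_n$ takes values in $[0,1]$.
\item If $\alpha_n=0$, then for every $\delta\in(0,1)$ the test satisfies $\sup_{P\in\mathcal P_{\mathrm{SI}}}\mathbb E_P[\psi_n]\le\delta$. Lemma~\ref{lem:nofreelunch} then gives $\sup_{Q\in\mathcal E}\mathbb E_Q[\psi_n]\le\delta$, and letting $\delta\downarrow0$ yields the bound $0=\alpha_n$.
\end{itemize}

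I do not expect a real obstacle here: the substantive content (the Shah--Peters hardness construction with a continuous conditioning variable) is entirely inside Lemma~\ref{lem:nofreelunch}. The one thing to check is that the finite-sample lemma is applied to the fixed map $\psi_n$ with a level depending only on $n$, not on $Q$, which is what allows the interchange of $\sup_Q$ and $\limsup_n$. This interchange is in the favorable direction: the pointwise limsup for each $Q$ is dominated by the uniform sequence $\alpha_n$.
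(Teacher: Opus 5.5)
Your argument is correct, but it takes a different route from the paper. The paper does not deduce this lemma from Lemma~\ref{lem:nofreelunch}. It simply cites Remark~4 of \cite{10.1214/19-AOS1857}, applied with $U=(Y(1),Y(0))$, $V=T$ and conditioning variable $X$.

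Your route shows that the asymptotic statement is a purely formal consequence of the finite-sample one. Applying Lemma~\ref{lem:nofreelunch} to each fixed $\psi_n$ at its attained size $\alpha_n$ gives a bound on $\mathbb E_Q[\psi_n]$ that does not depend on $Q$. So passing to $\limsup_n$ and then taking $\sup_Q$ is immediate. Your separate treatment of $\alpha_n\in\{0,1\}$ correctly respects the restriction $\alpha\in(0,1)$ in the finite-sample lemma.

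What your route buys is self-containedness. All setting-specific content, namely that the Shah--Peters hardness construction applies with binary $T$ and the class $\mathcal E$, is concentrated in Lemma~\ref{lem:nofreelunch}. The asymptotic version then inherits it without a second appeal to the literature. The paper's citation is shorter, but it leaves implicit the check that the cited remark covers this setting. Logically, both routes rest on the same finite-sample hardness result.
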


This asymptotic statement follows from Shah and Peters, Remark~4, applied with \(U=(Y(1),Y(0))\), \(V=T\), and conditioning variable \(X\).






\subsection{Impossibility Results over overlap sub-classes of \(\mathcal P_{\mathrm{SI}}\)}
\label{app:overlap_classes}

In this appendix, we also consider the following overlap-based sub-classes of \(\mathcal P_{\mathrm{SI}}\):
\begin{align*}
\mathcal P_{\mathrm{SI}}^{\mathrm{ov}}
&:=
\Bigl\{P\in\mathcal P_{\mathrm{SI}}: 0<e_P(X)<1 \ \text{a.s.}\Bigr\},
\\[6pt]
\mathcal P_{\mathrm{SI}}^{(r)}
&:=
\Bigl\{P\in\mathcal P_{\mathrm{SI}}:
\mathbb E_P\!\bigl[e_P(X)^{-r}(1-e_P(X))^{-r}\bigr] < +\infty
\Bigr\},\quad r\in(0,+\infty),
\\[6pt]
\mathcal P_{\mathrm{SI}}^{\mathrm{str}}
&:=
\Bigl\{P\in\mathcal P_{\mathrm{SI}}:
 e_P(X)\in[c_P,1-c_P]\ \text{a.s. for some } c_P>0
\Bigr\}.
\end{align*}
Here \(\mathcal P_{\mathrm{SI}}^{\mathrm{ov}}\) is the weakest overlap condition. The class \(\mathcal P_{\mathrm{SI}}^{(r)}\) requires a finite inverse moment of \(e(X)(1-e(X))\); in particular, the assumptions used in \cite{lei2021conformal} for counterfactual prediction are equivalent to \(P\in\mathcal P_{\mathrm{SI}}^{(1)}\). Finally, \(\mathcal P_{\mathrm{SI}}^{\mathrm{str}}\) is the usual strong-overlap condition, with the constant \(c_P\) allowed to depend on \(P\). These classes are nested: for any \(0<r_1<r_2<+\infty\),
\[
\mathcal P_{\mathrm{SI}}^{\mathrm{str}}\subset \mathcal P_{\mathrm{SI}}^{(r_2)}\subset \mathcal P_{\mathrm{SI}}^{(r_1)}\subset \mathcal P_{\mathrm{SI}}^{\mathrm{ov}}.
\]
Throughout this subsection, we fix any \(\gP\in\{\mathcal P_{\mathrm{SI}}^{\mathrm{ov}},\mathcal P_{\mathrm{SI}}^{\mathrm{str}},\mathcal P_{\mathrm{SI}}^{(r)}\}\), and state the impossibility results over \(\gP\); their proofs are deferred to Section~A.4.

Throughout this subsection, let \(\mathcal O_n^{\rm obs}:=\{(X_i,T_i,Y_i)\}_{i=1}^n\) denote the observed data, where \(Y_i=T_iY_i(1)+(1-T_i)Y_i(0)\). We assume that \((\mathcal O_n^{\rm obs},x,y)\mapsto \mathbf 1\{y\in \widehat C_{n,\alpha}(x)\}\) is jointly measurable, and write \(\mathrm{length}(A)\) for the Lebesgue measure of a measurable set \(A\subseteq\mathbb R\).

\begin{theorem}[Finite-sample impossibility]
\label{thm:finite_impossibility}
Suppose a prediction set \(\widehat C_{n,\alpha}\) satisfies the uniform finite-sample coverage guarantee
\[
\sup_{P \in \mathcal P_{\mathrm{SI}}}
\mathbb P_P\!\left( \Delta \notin \widehat C_{n,\alpha}(X) \right)
\le \alpha
\]
for some \(\alpha \in (0,1)\).
Then, for any \(P \in \mathcal P_{\mathrm{SI}}\) and any \(y \in \mathbb R\),
\[
\mathbb P_P\!\left( y \notin \widehat C_{n,\alpha}(X) \right) \le \alpha.
\]
Consequently,
\[
\mathbb E_P\!\left[ \mathrm{length}\bigl(\widehat C_{n,\alpha}(X)\bigr) \right] = \infty .
\]
\end{theorem}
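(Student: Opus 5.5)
The plan is to turn the coverage guarantee into a conditional independence test on a full-data sample of size \(n+1\), and then apply Lemma~\ref{lem:nofreelunch}. Fix \(P\in\mathcal P_{\mathrm{SI}}\) and \(y\in\mathbb R\). Write the \(n+1\) full-data points as \((X_i,Y_i(1),Y_i(0),T_i)\), \(i=1,\dots,n+1\), where the last one plays the role of the new unit. Define the (possibly randomized) test
\[
\psi_{n+1}(\overline{\mathcal O}_{n+1}) := \mathbb P\bigl(\Delta_{n+1}\notin \widehat C_{n,\alpha}(X_{n+1})\,\big|\,\overline{\mathcal O}_{n+1}\bigr),
\]
where the conditional probability integrates out only the internal randomization of \(\widehat C_{n,\alpha}\). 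Here \(\widehat C_{n,\alpha}\) is built from the observed part \(\mathcal O^{\rm obs}_n\) of the first \(n\) units, and \(\Delta_{n+1}=Y_{n+1}(1)-Y_{n+1}(0)\). The assumed joint measurability of \((\mathcal O_n^{\rm obs},x,y)\mapsto \mathbf 1\{y\in \widehat C_{n,\alpha}(x)\}\) makes \(\psi_{n+1}\) a measurable map \(\mathcal D^{n+1}\to[0,1]\). By hypothesis, \(\mathbb E_{P'}[\psi_{n+1}]=\mathbb P_{P'}(\Delta\notin\widehat C_{n,\alpha}(X))\le\alpha\) for every \(P'\in\mathcal P_{\mathrm{SI}}\). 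Lemma~\ref{lem:nofreelunch}, applied with sample size \(n+1\), then gives \(\mathbb E_Q[\psi_{n+1}]\le\alpha\) for every \(Q\in\mathcal E\).

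The key step is to choose \(Q\in\mathcal E\), typically outside \(\mathcal P_{\mathrm{SI}}\), that has the same observed-data law as \(P\) but whose ITE is identically \(y\). This exploits the missing-data non-identifiability. Concretely, draw \((X,T,Y)\) from the \(P\)-law of the observed triple. If \(T=1\), set \(Y(1)=Y\) and \(Y(0)=Y-y\); if \(T=0\), set \(Y(0)=Y\) and \(Y(1)=Y+y\). Then \(\Delta\equiv y\) under \(Q\), and the law of \((X,T,Y)\) under \(Q\) coincides with that under \(P\).

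The only thing to check is that \(Q\in\mathcal E\). Some coordinate of \(X\) is absolutely continuous because the \(X\)-marginal is unchanged, and \(T\) is binary. For continuity, the law of \(Y(1)\) under \(Q\) is a mixture of \(\mathcal L_P(Y(1)\mid T=1)\) and \(\mathcal L_P(Y(0)+y\mid T=0)\). Each component that has positive weight is dominated by a continuous law, namely \(\mathcal L_P(Y(1))\) or a shift of \(\mathcal L_P(Y(0))\), so it is atomless. Hence \(Y(1)\) is continuous, and the same argument applies to \(Y(0)\). I expect this verification, together with making sure the construction sits exactly inside the class \(\mathcal E\) as defined, to be the main (though modest) obstacle.

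Since \(\widehat C_{n,\alpha}\) and \(X_{n+1}\) depend only on observed data, which have identical laws under \(P\) and \(Q\), we obtain
\[
\mathbb P_P\bigl(y\notin\widehat C_{n,\alpha}(X)\bigr)=\mathbb P_Q\bigl(y\notin\widehat C_{n,\alpha}(X)\bigr)=\mathbb P_Q\bigl(\Delta\notin\widehat C_{n,\alpha}(X)\bigr)=\mathbb E_Q[\psi_{n+1}]\le\alpha .
\]
For the length claim, joint measurability and Tonelli give
\[
\mathbb E_P\bigl[\mathrm{length}(\widehat C_{n,\alpha}(X))\bigr]=\int_{\mathbb R}\mathbb P_P\bigl(y\in\widehat C_{n,\alpha}(X)\bigr)\,dy\ \ge\ \int_{\mathbb R}(1-\alpha)\,dy=\infty .
\]
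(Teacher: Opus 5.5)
Your proposal is correct and follows the paper's proof essentially step by step. Both use the same miscoverage test on $n+1$ full-data points fed into Lemma~\ref{lem:nofreelunch}, the same observationally equivalent $Q\in\mathcal E$ with $\Delta\equiv y$, and the same Tonelli argument for the length. Your only departures are minor refinements: you average out the internal randomization in $\psi_{n+1}$ instead of using the indicator $\eta$, and you explicitly check that $Y(1)$ and $Y(0)$ remain continuous under $Q$, which the paper simply asserts.
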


Theorem~\ref{thm:finite_impossibility} shows that any prediction set achieving uniform finite-sample
validity under ignorability assumptions must, with probability at least \(1-\alpha\), contain every real
number. As a result, such sets are necessarily uninformative, having infinite expected length.

\begin{theorem}[Asymptotic impossibility]
\label{thm:asymptotic_impossibility}
Suppose a sequence of prediction sets \(\{\widehat C_{n,\alpha}\}_{n\ge1}\) satisfies
\[
\limsup_{n\to\infty}
\sup_{P \in \mathcal P_{\mathrm{SI}}}
\mathbb P_P\!\left( \Delta \notin \widehat C_{n,\alpha}(X) \right)
\le \alpha .
\]
Then, for any \(P \in \mathcal P_{\mathrm{SI}}\) and any \(y \in \mathbb R\),
\[
\limsup_{n\to\infty}
\mathbb P_P\!\left( y \notin \widehat C_{n,\alpha}(X) \right)
\le \alpha,
\]
and hence
\[
\lim_{n\to\infty}
\mathbb E_P\!\left[ \mathrm{length}\bigl(\widehat C_{n,\alpha}(X)\bigr) \right]
= \infty .
\]
\end{theorem}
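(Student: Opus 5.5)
The plan is to reduce the claim to the asymptotic no-free-lunch result for conditional independence testing, Lemma~\ref{lem:asymp_nofreelunch}, by turning the prediction set into a sequence of tests. Fix \(P\in\mathcal P_{\mathrm{SI}}\) and \(y\in\mathbb R\). A new test point \((X,Y(1),Y(0),T)\) is independent of the sample, so it suffices to work with a sample of size \(n+1\). For \(m=n+1\) I would define the test
\[
\psi_{m}(\overline{\mathcal O}_{m}) := \mathbf 1\bigl\{ Y_{m}(1)-Y_{m}(0)\notin \widehat C_{n,\alpha}(X_{m})\bigr\},
\]
where \(\widehat C_{n,\alpha}\) is computed from the observed parts \((X_i,T_i,Y_i)\), \(i\le n\). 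The observed outcomes are measurable functions of the full data, so \(\psi_m\) is a measurable map on \(\mathcal D^{m}\); the joint measurability assumption gives this. Any external randomization of \(\widehat C\) is handled by averaging it out, which makes \(\psi_m\) take values in \([0,1]\). The hypothesis then reads \(\limsup_m\sup_{P\in\mathcal P_{\mathrm{SI}}}\mathbb E_P\psi_m\le\alpha\). Lemma~\ref{lem:asymp_nofreelunch} yields \(\limsup_m \mathbb E_Q\psi_m\le\alpha\) for every \(Q\in\mathcal E\), including distributions that violate strong ignorability.

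The key step is to choose \(Q\in\mathcal E\) that exploits the missing-data non-identifiability. It should have the same observed-data law as \(P\) for \((X_i,T_i,Y_i)\), while making the ITE of the test unit concentrate near \(y\). First I would try the following construction. Take \(X\sim P_X\), \(T\mid X\sim e_P(X)\), and when \(T=1\) draw \(Y(1)\) from \(P_{Y(1)\mid X,T=1}\); when \(T=0\) draw \(Y(0)\) from \(P_{Y(0)\mid X,T=0}\). The unobserved potential outcome is set to \(Y(0)=Y(1)-y-\eta\) when \(T=1\), and \(Y(1)=Y(0)+y+\eta\) when \(T=0\), with \(\eta\sim\mathrm{Unif}(-\delta,\delta)\) independent of everything else. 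Then \(Q\) has continuous potential outcomes, has the same continuous coordinate of \(X\), and belongs to \(\mathcal E\). Its observed-data law coincides with that of \(P\), and \(\Delta=y+\eta\) almost surely.

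Because \(\widehat C_{n,\alpha}\) depends only on observed data and \(X_m\), the joint law of \((\widehat C_{n,\alpha},X_m)\) is identical under \(P\) and \(Q\), and \(\eta\) is independent of this pair. Therefore
\[
\mathbb E_Q\psi_m=\frac1{2\delta}\int_{-\delta}^{\delta}\mathbb P_P\bigl(y+u\notin\widehat C_{n,\alpha}(X)\bigr)\,du .
\]
This is an averaged version of the target quantity. Getting the pointwise statement at exactly \(y\) is what I expect to be the main obstacle. A cleaner route is to avoid the smoothing altogether by setting \(\Delta=y\) exactly. In that case the counterfactual outcome is still continuous marginally, since it is a shift of a continuous variable, so \(Q\in\mathcal E\) with \(\eta\equiv0\). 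I would adopt this version, which gives \(\mathbb E_Q\psi_m=\mathbb P_P(y\notin\widehat C_{n,\alpha}(X))\) and hence \(\limsup_n \mathbb P_P(y\notin\widehat C_{n,\alpha}(X))\le\alpha\).

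For the length statement, Fubini and Tonelli give
\[
\mathbb E_P[\mathrm{length}(\widehat C_{n,\alpha}(X))]=\int_{\mathbb R}\mathbb P_P\bigl(y\in\widehat C_{n,\alpha}(X)\bigr)\,dy .
\]
By Fatou's lemma, its \(\liminf\) is at least \(\int\liminf_n\mathbb P_P(y\in\widehat C_{n,\alpha}(X))\,dy\ge\int(1-\alpha)\,dy=\infty\). Hence the limit exists and equals \(\infty\).
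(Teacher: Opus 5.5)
Your proposal is correct and follows essentially the same route as the paper. Both use the test $\mathbf 1\{Y_{n+1}(1)-Y_{n+1}(0)\notin \widehat C_{n,\alpha}(X_{n+1})\}$, Lemma~\ref{lem:asymp_nofreelunch}, the observationally equivalent $Q\in\mathcal E$ with $\Delta\equiv y$ (your $\eta\equiv 0$ version is exactly the paper's construction), and Fubini plus Fatou for the length; the smoothed detour is unnecessary, and your note about averaging out external randomization is a small piece of care the paper leaves implicit.
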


Theorem~\ref{thm:asymptotic_impossibility} establishes that even asymptotic distribution-free validity
under ignorability assumptions forces prediction sets to be uninformative. The impossibility therefore
reflects the hardness of prediction for ITE under the distribution-free assumption.

\begin{remark}
The impossibility results established in this paper are driven by the hardness of testing
(conditional) independence and therefore do not rely on whether the propensity score
$e(x)=\mathbb P(T=1\mid X=x)$ is known or unknown.
This should not, however, be interpreted as implying that all procedures in randomized experiments are covered by our result. In experimental settings such as stratified or blocked randomized experiments, some inference procedures may explicitly exploit the known assignment mechanism or other design information. Such procedures fall outside the scope of the present theorem because the resulting prediction-set construction is no longer distribution-free.
\end{remark}

\begin{remark}
Our arguments crucially exploit hardness results for conditional independence testing that
require a continuous covariate $X$.
In the absence of covariates, or when $X$ is purely discrete with finite support, this
mechanism no longer applies.
Whether analogous impossibility results hold in such settings remains an open question.
\end{remark}

In the following proofs, we use the same test function
\[
\eta(\overline{\mathcal O}_{n+1})
=\mathbf 1\!\left\{\,Y_{n+1}(1)-Y_{n+1}(0)\notin \widehat C_{n,\alpha}(X_{n+1})\,\right\}.
\]

\subsection{Proof of Theorem~\ref{thm:impossibility_overlap} and ~\ref{thm:asymptotic_impossibility_overlap}}
\label{appendix:finite_impossibility}

The proofs of the main theorems proceed by verifying the conditions of Theorem~\ref{thm:finite_impossibility} or Theorem~\ref{thm:asymptotic_impossibility}. To this end, we first establish the following technical lemma.
\begin{lemma}
\label{lem:denseness_overlap_subclass}

Let
\[
\psi_n:\mathcal D^n\to[0,1]
\]
be any (possibly randomized) test. Then
\[
\sup_{P\in\gP}\mathbb E_P[\psi_n]=\sup_{P\in\mathcal P_{\mathrm{SI}}}\mathbb E_P[\psi_n].
\]
\end{lemma}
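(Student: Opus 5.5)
Since $\gP\subseteq\mathcal P_{\mathrm{SI}}$, the inequality $\sup_{P\in\gP}\mathbb E_P[\psi_n]\le\sup_{P\in\mathcal P_{\mathrm{SI}}}\mathbb E_P[\psi_n]$ is immediate. The work is the reverse inequality, and I would prove it by a denseness argument in total variation. Fix $P\in\mathcal P_{\mathrm{SI}}$ and $\epsilon>0$. I will construct $P_\epsilon\in\mathcal P_{\mathrm{SI}}^{\mathrm{str}}$, the smallest of the three classes by the nesting $\mathcal P_{\mathrm{SI}}^{\mathrm{str}}\subset\mathcal P_{\mathrm{SI}}^{(r)}\subset\mathcal P_{\mathrm{SI}}^{\mathrm{ov}}$, such that $d_{TV}(P,P_\epsilon)\le\epsilon$. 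Because $\psi_n$ takes values in $[0,1]$,
\[
\bigl|\mathbb E_P[\psi_n]-\mathbb E_{P_\epsilon}[\psi_n]\bigr|\le d_{TV}(P^{\otimes n},P_\epsilon^{\otimes n})\le n\,d_{TV}(P,P_\epsilon)\le n\epsilon .
\]
Letting $\epsilon\to0$ with $n$ fixed then gives $\mathbb E_P[\psi_n]\le\sup_{Q\in\gP}\mathbb E_Q[\psi_n]$. Taking the supremum over $P$ finishes the proof. If $\psi_n$ uses auxiliary randomization, I would absorb it into $\psi_n$ by averaging, which keeps it $[0,1]$-valued.

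The construction mixes the treatment assignment and leaves everything else alone. Under $P$, write the law as $P_X\otimes P_{(Y(1),Y(0))\mid X}\otimes \mathrm{Bern}(e_P(X))$, which is valid because strong ignorability makes $T$ conditionally independent of the potential outcomes given $X$. Define $P_\epsilon$ by keeping $P_X$ and $P_{(Y(1),Y(0))\mid X}$ unchanged and replacing the propensity with
\[
e_\epsilon(x):=(1-\epsilon)e_P(x)+\epsilon/2 ,
\]
with $T$ still conditionally independent of $(Y(1),Y(0))$ given $X$. Then $e_\epsilon(x)\in[\epsilon/2,1-\epsilon/2]$, so $P_\epsilon\in\mathcal P_{\mathrm{SI}}^{\mathrm{str}}$ with $c_{P_\epsilon}=\epsilon/2$. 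Strong ignorability holds by construction. The membership conditions in $\mathcal E$ are also preserved: some coordinate of $X$ is still absolutely continuous, $T$ is binary, and $Y(1),Y(0)$ are continuous, because none of these marginals changed.

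For the total variation bound, note that the two joint laws differ only through the conditional law of $T$ given $X$. Hence
\[
d_{TV}(P,P_\epsilon)=\mathbb E_{P_X}\bigl[\,|e_P(X)-e_\epsilon(X)|\,\bigr]=\epsilon\,\mathbb E\bigl|e_P(X)-\tfrac12\bigr|\le\epsilon/2 .
\]
Equivalently, $P_\epsilon=(1-\epsilon)P+\epsilon\tilde P$, where $\tilde P$ is $P$ with $T$ redrawn as an independent $\mathrm{Bern}(1/2)$. This gives the same bound directly, and it also gives $\mathbb E_{P_\epsilon^{\otimes n}}[\psi_n]\ge(1-\epsilon)^n\mathbb E_{P^{\otimes n}}[\psi_n]$, which is an alternative route to the same conclusion.

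I expect the main obstacle to be bookkeeping rather than depth. One must check that the modified distribution really stays in $\mathcal E$ and satisfies strong ignorability. One must also check that the test is applied to full data $\mathcal D^n$, so the whole joint law matters and not just the observed-data law; the TV bound on the full joint law handles this. Finally, the argument needs the product TV inequality. For $\mathcal P_{\mathrm{SI}}^{(r)}$ no extra moment check is needed, since $e_\epsilon(1-e_\epsilon)\ge(\epsilon/2)^2$ bounds the inverse moment directly, consistent with the nesting.
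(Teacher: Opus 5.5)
Your proposal is correct and is essentially the paper's proof. The paper builds the same perturbed law through the coupling $T_\varepsilon=(1-A_\varepsilon)T+A_\varepsilon B$, with $A_\varepsilon\sim\mathrm{Bernoulli}(\varepsilon)$ and $B\sim\mathrm{Bernoulli}(1/2)$. This gives propensity $(1-\varepsilon)e_0(X)+\varepsilon/2$, strong ignorability, strong overlap with constant $\varepsilon/2$, and a total-variation bound that carries over to $n$-fold products. Your explicit $n\epsilon$ product bound and the mixture domination $P_\epsilon^{\otimes n}\ge(1-\epsilon)^nP^{\otimes n}$ are slightly more quantitative versions of that same limiting step.
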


\begin{theorem}[Finite impossibility over overlap sub-classes]
\label{thm:impossibility_overlap}
Suppose a prediction set \(\widehat C_{n,\alpha}\) satisfies finite-sample distribution-free validity at level $1-\alpha$ over $\gP$:
\[
\sup_{P \in \gP}
\mathbb P_P\!\left( \Delta \notin \widehat C_{n,\alpha}(X) \right)
\le \alpha
\]
for some \(\alpha \in (0,1)\).
Then, for any \(P \in \mathcal P_{\mathrm{SI}}\) and any \(y \in \mathbb R\),
\[
\mathbb P_P\!\left( y \notin \widehat C_{n,\alpha}(X) \right) \le \alpha.
\]
Consequently,
\[
\mathbb E_P\!\left[ \mathrm{length}\bigl(\widehat C_{n,\alpha}(X)\bigr) \right] = \infty .
\]
\end{theorem}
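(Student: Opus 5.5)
The plan is to reduce Theorem~\ref{thm:impossibility_overlap} to Theorem~\ref{thm:finite_impossibility} through Lemma~\ref{lem:denseness_overlap_subclass}. Consider the miscoverage test on the full data of size $n+1$,
\[
\eta(\overline{\mathcal O}_{n+1})=\mathbf 1\!\left\{Y_{n+1}(1)-Y_{n+1}(0)\notin \widehat C_{n,\alpha}(X_{n+1})\right\}.
\]
Here $\widehat C_{n,\alpha}$ is computed from the observed part of the first $n$ units. Those observed quantities are measurable functions of the full data, and any external randomization can be averaged into a $[0,1]$-valued test. Hence $\eta$ (or its conditional expectation given the data) is a legitimate randomized test $\psi_{n+1}:\mathcal D^{n+1}\to[0,1]$, by the joint measurability assumption. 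Its expectation under $P$ is exactly $\mathbb P_P(\Delta\notin \widehat C_{n,\alpha}(X))$.

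The hypothesis gives $\sup_{P\in\gP}\mathbb E_P[\eta]\le\alpha$. Lemma~\ref{lem:denseness_overlap_subclass}, applied with sample size $n+1$, then yields
\[
\sup_{P\in\mathcal P_{\mathrm{SI}}}\mathbb P_P\bigl(\Delta\notin\widehat C_{n,\alpha}(X)\bigr)=\sup_{P\in\gP}\mathbb E_P[\eta]\le\alpha .
\]
This is precisely the uniform coverage hypothesis of Theorem~\ref{thm:finite_impossibility} over $\mathcal P_{\mathrm{SI}}$. Invoking that theorem gives, for every $P\in\mathcal P_{\mathrm{SI}}$ and $y\in\mathbb R$, both $\mathbb P_P(y\notin \widehat C_{n,\alpha}(X))\le\alpha$ and $\mathbb E_P[\mathrm{length}(\widehat C_{n,\alpha}(X))]=\infty$.

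For completeness, the length conclusion follows from the pointwise bound by Fubini/Tonelli:
\[
\mathbb E_P[\mathrm{length}(\widehat C_{n,\alpha}(X))]=\int_{\mathbb R}\mathbb P_P\bigl(y\in\widehat C_{n,\alpha}(X)\bigr)\,dy\ge\int_{\mathbb R}(1-\alpha)\,dy=\infty .
\]
This step is valid by joint measurability and $\alpha<1$.

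The only delicate point is checking that $\eta$ fits the framework of Lemma~\ref{lem:denseness_overlap_subclass}. Concretely, $\eta$ must be a function on $\mathcal D^{n+1}$ whose expectation depends on $P$ only through the i.i.d.\ law of the $n+1$ full-data tuples. That holds because the test point is an independent draw from the same $P$. Any internal randomization of $\widehat C_{n,\alpha}$ is handled by passing to the conditional rejection probability. All the substantive work is in Lemma~\ref{lem:denseness_overlap_subclass} and Theorem~\ref{thm:finite_impossibility}, which we take as given.
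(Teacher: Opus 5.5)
Your proposal is correct and matches the paper's proof essentially verbatim. The paper likewise applies Lemma~\ref{lem:denseness_overlap_subclass} with sample size $n+1$ to the miscoverage test $\eta$, which transfers the validity bound from $\gP$ to $\mathcal P_{\mathrm{SI}}$, and then invokes Theorem~\ref{thm:finite_impossibility}; your remarks on randomization and the Fubini argument for the length are consistent with what the paper does inside that theorem's proof.
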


Before proving Lemma~\ref{lem:denseness_overlap_subclass}, we show how it yields the stated theorems; the proof of the lemma is given subsequently.

\begin{proof}[Proof of Theorem~\ref{thm:impossibility_overlap}]
It suffices to verify the condition in Theorem~\ref{thm:finite_impossibility}. Applying Lemma~\ref{lem:denseness_overlap_subclass} with sample size \(n+1\) to the test function \(\eta\) yields
\[
\sup_{P \in \gP}
\mathbb P_P\!\left( \Delta \notin \widehat C_{n,\alpha}(X) \right)
=
\sup_{P \in \gP}\E_P[\eta(\overline{\mathcal O}_{n+1})]
=
\sup_{P \in \mathcal P_{\mathrm{SI}}}\E_P[\eta(\overline{\mathcal O}_{n+1})]
=
\sup_{P \in \mathcal P_{\mathrm{SI}}}
\mathbb P_P\!\left( \Delta \notin \widehat C_{n,\alpha}(X) \right).
\]
Since the left-hand side is at most \(\alpha\) by assumption, the desired condition follows.
\end{proof}

\begin{proof}[Proof of Theorem~\ref{thm:asymptotic_impossibility_overlap}]
It suffices to verify the condition in Theorem~\ref{thm:asymptotic_impossibility}. For each \(n\), applying Lemma~\ref{lem:denseness_overlap_subclass} with sample size \(n+1\) to the test function \(\eta\) gives
\[
\sup_{P \in \gP}
\mathbb P_P\!\left( \Delta \notin \widehat C_{n,\alpha}(X) \right)
=
\sup_{P \in \mathcal P_{\mathrm{SI}}}
\mathbb P_P\!\left( \Delta \notin \widehat C_{n,\alpha}(X) \right).
\]
Taking the \(\limsup\) as \(n\to\infty\) and using the assumed asymptotic bound over \(\gP\) yields the desired condition.
\end{proof}

We now turn to the proof of the lemma.

\begin{proof}[Proof of Lemma~\ref{lem:denseness_overlap_subclass}]
The inequality
\[
\sup_{P\in\gP}\mathbb E_P[\psi_n]\le \sup_{P\in\mathcal P_{\mathrm{SI}}}\mathbb E_P[\psi_n]
\]
is immediate from \(\gP\subset \mathcal P_{\mathrm{SI}}\). It remains to prove the reverse inequality.

Fix any \(P_0\in\mathcal P_{\mathrm{SI}}\). Let
\[
\mathcal O=(X,Y(1),Y(0),T)\sim P_0,
\]
and let \(B\sim \mathrm{Bernoulli}(1/2)\) be independent of \(\mathcal O\).
For \(\varepsilon\in(0,1)\), define an auxiliary Bernoulli variable
\(A_\varepsilon\sim \mathrm{Bernoulli}(\varepsilon)\), independent of
\((\mathcal O,B)\), and set
\[
T_\varepsilon=(1-A_\varepsilon)T + A_\varepsilon B.
\]
Let \(P_{0,\varepsilon}\) denote the law of
\[
\gO_{\varepsilon}:=(X,Y(1),Y(0),T_\varepsilon).
\]

We first verify that \(P_{0,\varepsilon}\in\mathcal P_{\mathrm{SI}}^{\mathrm{str}}\subset \gP\).
Since \((Y(1),Y(0))\perp T\mid X\) under \(P_0\), and \(A_\varepsilon,B\) are
independent of \((X,Y(1),Y(0),T)\), it follows that
\[
(Y(1),Y(0))\perp T_\varepsilon\mid X.
\]
Hence \(P_{0,\varepsilon}\in\mathcal P_{\mathrm{SI}}\).

Next, writing \(e_0(X)=\mathbb P(T=1\mid X)\) under \(P_0\), we have
\[
\mathbb P(T_\varepsilon=1\mid X)
=
(1-\varepsilon)e_0(X)+\frac{\varepsilon}{2},\qquad
\mathbb P(T_\varepsilon=0\mid X)
=
(1-\varepsilon)(1-e_0(X))+\frac{\varepsilon}{2}.
\]
Therefore
\[
\min\!\bigl(\mathbb P(T_\varepsilon=1\mid X),\mathbb P(T_\varepsilon=0\mid X)\bigr)\ge \frac{\varepsilon}{2}\qquad\text{a.s.},
\]
which proves \(P_{0,\varepsilon}\in\mathcal P_{\mathrm{SI}}^{\mathrm{str}}\subset \gP\).

Since
\[
\|P_{0,\varepsilon}-P_0\|_{\mathrm{TV}}\le \mathbb P(T_\varepsilon\neq T)\le \varepsilon,
\]
we have
\[
P_{0,\varepsilon}^{\otimes n}\to P_0^{\otimes n}
\qquad\text{in total variation as }\varepsilon\downarrow0.
\]
Because \(0\le \psi_n\le 1\), it follows that
\[
\mathbb E_{P_0}[\psi_n]
=
\lim_{\varepsilon\downarrow0}\mathbb E_{P_{0,\varepsilon}}[\psi_n]
\le
\sup_{\widetilde P\in\gP}\mathbb E_{\widetilde P}[\psi_n].
\]
Taking the supremum over \(P_0\in\mathcal P_{\mathrm{SI}}\) yields
\[
\sup_{P\in\mathcal P_{\mathrm{SI}}}\mathbb E_P[\psi_n]
\le
\sup_{\widetilde P\in\gP}\mathbb E_{\widetilde P}[\psi_n].
\]
Combining this with the easy direction completes the proof.
\end{proof}

We next give the proofs of the two auxiliary theorems~\ref{thm:finite_impossibility} and~\ref{thm:asymptotic_impossibility}.

\subsection{Proof of Theorems~\ref{thm:finite_impossibility} and~\ref{thm:asymptotic_impossibility}}

\begin{proof}[Proof of Theorem~\ref{thm:finite_impossibility}]
Under any \(P\in\mathcal P_{\mathrm{SI}}\),
\[
\mathbb E_P\!\left[\eta(\overline{\mathcal O}_{n+1})\right]
=\mathbb P_P\!\left(Y_{n+1}(1)-Y_{n+1}(0)\notin \widehat C_{n,\alpha}(X_{n+1})\right)
\le \alpha,
\]
so \(\eta\) is a valid level-\(\alpha\) test for the corresponding conditional independence null.
By Lemma~\ref{lem:nofreelunch},
it follows that for any \(Q\in\mathcal E\),
\[
\mathbb E_Q\!\left[\eta(\overline{\mathcal O}_{n+1})\right]\le \alpha,
\quad\text{i.e.}\quad
\mathbb P_Q\!\left(Y_{n+1}(1)-Y_{n+1}(0)\notin \widehat C_{n,\alpha}(X_{n+1})\right)\le \alpha.
\]

Fix any \(P\in\mathcal P_{\mathrm{SI}}\) and write
\[
P = P_{Y(1),Y(0)\mid X,T}\otimes P_{X,T}.
\]
For an arbitrary \(y\in\mathbb R\), define a distribution \(Q\in\mathcal E\) on the same space by:
\begin{enumerate}
\item The marginal of \((X,T)\) is preserved: \(Q_{X,T}=P_{X,T}\).
\item The conditional distribution of \((Y(1),Y(0))\) given \((X,T)\) is
\[
Q_{Y(1),Y(0)\mid X,T=t}
=
\begin{cases}
\mathcal L\bigl(Y(0)+y,\; Y(0)\bigr),\ \ Y(0)\sim P_{Y(0)\mid X,T=0}, & t=0,\\[3pt]
\mathcal L\bigl(Y(1),\; Y(1)-y\bigr),\ \ Y(1)\sim P_{Y(1)\mid X,T=1}, & t=1,
\end{cases}
\]
where \(\mathcal L(\cdot)\) denotes the law of a random vector.
\end{enumerate}
Under this construction,
\[
Q_{(X,T,Y)}=P_{(X,T,Y)},
\]
and, moreover,
\[
Q_{Y(1)\mid X,T=1}=P_{Y(1)\mid X,T=1},
\qquad
Q_{Y(0)\mid X,T=0}=P_{Y(0)\mid X,T=0},
\]
and
\[
Q_{Y(1)-Y(0)\mid X,T=1}=Q_{Y(1)-Y(0)\mid X,T=0}=\delta_y,
\]
so \(Q\in\mathcal E\) and \(\Delta=Y(1)-Y(0)\equiv y\) almost surely under \(Q\).

Since the samples are i.i.d. under both \(P\) and \(Q\), and \(Q_{(X,T,Y)}=P_{(X,T,Y)}\) together with \(Q_X=P_X\), we have
\[
\mathcal L_Q\{(X_i,T_i,Y_i)_{i=1}^n,X_{n+1}\}
=
\mathcal L_P\{(X_i,T_i,Y_i)_{i=1}^n,X_{n+1}\}.
\]

Therefore,
\begin{align*}
\mathbb P_Q\!\left(Y_{n+1}(1)-Y_{n+1}(0)\notin \widehat C_{n,\alpha}(X_{n+1})\right)
&=
\mathbb P_Q\!\left(y\notin \widehat C_{n,\alpha}(X_{n+1})\right).
\end{align*}
Applying the bound \(\mathbb E_Q[\eta(\overline{\mathcal O}_{n+1})]\le \alpha\) yields
\[
\mathbb P_Q\!\left(y\notin \widehat C_{n,\alpha}(X_{n+1})\right)\le \alpha.
\]
By the equality of observed-data laws above, it follows that
\[
\mathbb P_Q\!\left(y\notin \widehat C_{n,\alpha}(X_{n+1})\right)
=
\mathbb P_P\!\left(y\notin \widehat C_{n,\alpha}(X_{n+1})\right).
\]
Hence \(\mathbb P_P\!\left(y\notin \widehat C_{n,\alpha}(X)\right)\le \alpha\).
This proves the first claim.

To prove the infinite expected length claim, define \(f_n(y)=\mathbb P_P\bigl(y\in \widehat C_{n,\alpha}(X)\bigr)\).
The previous inequality gives \(f_n(y)\ge 1-\alpha\) for every \(y\in\mathbb R\). Hence
\begin{align*}
\mathbb E_P\!\left[\operatorname{length}\bigl(\widehat C_{n,\alpha}(X)\bigr)\right]
&=
\int_{\mathbb R}\mathbb P_P\!\left(y\in \widehat C_{n,\alpha}(X)\right)\,dy
=
\int_{\mathbb R} f_n(y)\,dy
\ \ge\ \int_{\mathbb R} (1-\alpha)\,dy
=+\infty.
\end{align*}
The proof is complete.
\end{proof}

\begin{proof}[Proof of Theorem~\ref{thm:asymptotic_impossibility}]
By the assumed asymptotic validity,
\[
\limsup_{n\to\infty}\ \sup_{P\in\mathcal P_{\mathrm{SI}}}
\mathbb E_P\!\left[\eta(\overline{\mathcal O}_{n+1})\right]
=
\limsup_{n\to\infty}\ \sup_{P\in\mathcal P_{\mathrm{SI}}}
\mathbb P_P\!\left(Y_{n+1}(1)-Y_{n+1}(0)\notin \widehat C_{n,\alpha}(X_{n+1})\right)
\le \alpha.
\]
Applying Lemma~\ref{lem:asymp_nofreelunch}, we obtain that for any \(Q\in\mathcal E\),
\[
\limsup_{n\to\infty}\ \mathbb E_Q\!\left[\eta(\overline{\mathcal O}_{n+1})\right]\le \alpha,
\quad\text{i.e.}\quad
\limsup_{n\to\infty}\ \mathbb P_Q\!\left(Y_{n+1}(1)-Y_{n+1}(0)\notin \widehat C_{n,\alpha}(X_{n+1})\right)\le \alpha.
\]

Fix \(P\in\mathcal P_{\mathrm{SI}}\) and \(y\in\mathbb R\), and construct \(Q\in\mathcal E\) exactly as in the proof of
Theorem~\ref{thm:finite_impossibility}, so that \(\Delta\equiv y\) a.s.\ under \(Q\).
Then for each \(n\),
\[
\mathbb P_Q\!\left(Y_{n+1}(1)-Y_{n+1}(0)\notin \widehat C_{n,\alpha}(X_{n+1})\right)
=
\mathbb P_Q\!\left(y\notin \widehat C_{n,\alpha}(X_{n+1})\right).
\]
Moreover, the same construction preserves the full observed-data law, so for each \(n\),
\[
\mathbb P_Q\!\left(y\notin \widehat C_{n,\alpha}(X_{n+1})\right)
=
\mathbb P_P\!\left(y\notin \widehat C_{n,\alpha}(X_{n+1})\right).
\]
Therefore,
\[
\limsup_{n\to\infty}\ \mathbb P_P\!\left(y\notin \widehat C_{n,\alpha}(X)\right)=\limsup_{n\to\infty}\ \mathbb P_Q\!\left(y\notin \widehat C_{n,\alpha}(X)\right)\le \alpha.
\]

For the divergence of expected length, set \(f_n(y)=\mathbb P_P\bigl(y\in \widehat C_{n,\alpha}(X)\bigr)\).
Then \(\liminf_{n\to\infty} f_n(y)\ge 1-\alpha\) for every \(y\in\mathbb R\).
By Fatou's lemma and Fubini's theorem,
\begin{align*}
+\infty
=
\int_{\mathbb R}\liminf_{n\to\infty} f_n(y)\,dy
\le
\liminf_{n\to\infty}\int_{\mathbb R} f_n(y)\,dy
&=
\liminf_{n\to\infty}\int_{\mathbb R}\int_{\mathbb R^p}\mathbf 1\!\left\{y\in \widehat C_{n,\alpha}(x)\right\}P_X(dx)\,dy\\
&=
\liminf_{n\to\infty}\int_{\mathbb R^p}\operatorname{length}\bigl(\widehat C_{n,\alpha}(x)\bigr)\,P_X(dx)\\
&=
\liminf_{n\to\infty}\mathbb E_P\!\left[\operatorname{length}\bigl(\widehat C_{n,\alpha}(X)\bigr)\right],
\end{align*}
which implies \(\lim_{n\to\infty}\mathbb E_P[\operatorname{length}(\widehat C_{n,\alpha}(X))]=\infty\).
\end{proof}

\subsection{Proof of Theorem~\ref{prop:positive_rate}}
\label{app:proof_positive_rate}

We prove the following stronger conditional statement, which is a slightly stronger version of
Theorem~\ref{prop:positive_rate}.

\begin{proposition}
\label{prop:positive_rate_conditional}
Suppose Assumption~\ref{ass:positive_structure} holds. Let the treatment-stratified split ratio
satisfy $\rho_n\in(0,1)$, $n\rho_n\to\infty$, and $n(1-\rho_n)\to\infty$. Assume moreover
that, for each $t\in\{0,1\}$, $\|\hat\mu_t-\mu_t\|_\infty = O_P(r_{n,t})$ for some
deterministic sequence $r_{n,t}\to 0$, and that the distribution of
$\xi:=\varepsilon_1-\varepsilon_0$ admits a density $f_\xi$ which is continuous and strictly
positive in neighborhoods of its $\alpha/2$- and $(1-\alpha/2)$-quantiles
$q_\ell:=F_\xi^{-1}(\alpha/2)$ and $q_u:=F_\xi^{-1}(1-\alpha/2)$. Let
$m_n:=\min\{|\mathcal T_2|,\,|\mathcal C_2|\}$, and let
\[
\mathcal D_n
:=
\sigma\!\Big(
\{(X_i,T_i,Y_i)\}_{i=1}^n,\ \mathcal T_1,\mathcal C_1,\ (i_1,j_1),\dots,(i_{m_n},j_{m_n})
\Big)
\]
denote the sigma-field generated by the observed sample, the treatment-stratified split, and the
random pairing in the second stage of Algorithm~\ref{alg:split_pair}. Then
\[
\left|
\mathbb P_P\!\left(
\Delta_{n+1}\in \hat C_{n,\alpha}(X_{n+1}) \,\middle|\, \mathcal D_n
\right)
-(1-\alpha)
\right|
=
O_P\!\left(r_{n,0}+r_{n,1}+m_n^{-1/2}\right).
\]
\end{proposition}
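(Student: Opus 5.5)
Conditional on $\mathcal D_n$, the fitted functions $\hat\mu_0,\hat\mu_1$ and the empirical quantiles $\hat q_{\alpha/2},\hat q_{1-\alpha/2}$ are fixed. The new unit $(X_{n+1},Y_{n+1}(1),Y_{n+1}(0))$ is independent of $\mathcal D_n$. I would write
\[
\Delta_{n+1}-\hat\tau(X_{n+1})=\xi_{n+1}-\delta(X_{n+1}),\qquad \delta:=\hat\tau-\tau .
\]
Under Assumption~\ref{ass:positive_structure}(ii), $\xi_{n+1}\perp X_{n+1}$, and $\xi$ has distribution function $F_\xi$. Hence
\[
\mathbb P\bigl(\Delta_{n+1}\in\hat C_{n,\alpha}(X_{n+1})\mid\mathcal D_n\bigr)
=\mathbb E_{X}\bigl[F_\xi(\hat q_{1-\alpha/2}+\delta(X))-F_\xi(\hat q_{\alpha/2}+\delta(X))\bigr],
\]
ignoring atoms, which is harmless since $\xi$ has a density near the quantiles. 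Since $\|\delta\|_\infty\le\|\hat\mu_1-\mu_1\|_\infty+\|\hat\mu_0-\mu_0\|_\infty=O_P(r_{n,0}+r_{n,1})$, it suffices to show that $\hat q_{\alpha/2}-q_\ell$ and $\hat q_{1-\alpha/2}-q_u$ are $O_P(r_{n,0}+r_{n,1}+m_n^{-1/2})$. Then the density of $\xi$ is continuous and positive near $q_\ell$ and $q_u$, hence locally bounded there, so $F_\xi$ is locally Lipschitz at those points. This gives the coverage bound $(1-\alpha)+O_P(r_{n,0}+r_{n,1}+m_n^{-1/2})$, as claimed.

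\textbf{Quantile step.} For a pair $(i_k,j_k)$ with $i_k\in\mathcal T_2$ and $j_k\in\mathcal C_2$, strong ignorability gives
\[
\hat W_k=\varepsilon^{(1)}_{i_k}-\varepsilon^{(0)}_{j_k}-\bigl(\hat\mu_1-\mu_1\bigr)(X_{i_k})+\bigl(\hat\mu_0-\mu_0\bigr)(X_{j_k})=:W_k+b_k,
\]
with $|b_k|\le r_n^*:=\|\hat\mu_1-\mu_1\|_\infty+\|\hat\mu_0-\mu_0\|_\infty$. The residual $\varepsilon_t$ is independent of $X$ by assumption. Within the treated group, $\varepsilon^{(1)}$ is also independent of $T$, because $(Y(1),Y(0))\perp T\mid X$ and $\varepsilon_1\perp X$ together give $\varepsilon_1\perp(X,T)$. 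The same holds for controls.

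Next I condition on the treatment labels, the split and the pairing. These are independent of the second-fold residuals, so $\varepsilon^{(1)}_{i}$ for $i\in\mathcal T_2$ are i.i.d.\ copies of $\varepsilon_1$, and $\varepsilon^{(0)}_{j}$ for $j\in\mathcal C_2$ are i.i.d.\ copies of $\varepsilon_0$. All of these are independent of each other and of $\hat\mu_t$, which is built from $\mathcal I_1$. Since the pairs use disjoint units, $W_1,\dots,W_{m_n}$ are i.i.d.\ with law $F_\xi$. Note that $Y(1)\perp Y(0)\mid X$ is not even needed for this step. It is needed only so that the new unit's $\xi_{n+1}$ has the law of $\varepsilon_1-\varepsilon_0$ computed from independent copies, which matches the law of the paired differences.

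The empirical quantiles of the $W_k$ then satisfy $\tilde q_{\alpha/2}-q_\ell=O_P(m_n^{-1/2})$, by the standard Bahadur / DKW argument using the positive density at $q_\ell$, and similarly at $q_u$. A uniform shift of every point by at most $r_n^*$ moves each empirical quantile by at most $r_n^*$, because empirical quantiles are monotone under pointwise perturbation. Hence
\[
|\hat q_{\alpha/2}-q_\ell|\le|\tilde q_{\alpha/2}-q_\ell|+r_n^*=O_P(r_{n,0}+r_{n,1}+m_n^{-1/2}),
\]
and the same bound holds at the upper quantile.

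\textbf{Main obstacle.} The delicate part is bookkeeping of independence and conditioning. Three things must be checked. First, conditioning on $\mathcal D_n$, which includes the observed $Y$'s, is legitimate: the coverage probability only integrates over the fresh unit, so the expression above is exactly the conditional probability. Second, the $O_P$ statements hold under the joint randomness. Third, $m_n\to\infty$ in probability. For the third point, strong overlap gives $|\mathcal T|,|\mathcal C|\ge c_Pn/2$ with probability tending to one, and $n(1-\rho_n)\to\infty$ then yields $m_n\asymp n(1-\rho_n)$. Finally, Theorem~\ref{prop:positive_rate} follows by taking expectations. The conditional error is bounded by $1$ and is $o_P(1)$, so dominated convergence gives unconditional coverage $1-\alpha+o(1)$.
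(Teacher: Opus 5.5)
Your proposal is correct and follows the same route as the paper's proof. You write the conditional coverage as $\mathbb E[F_\xi(\hat q_{1-\alpha/2}+\delta(X))-F_\xi(\hat q_{\alpha/2}+\delta(X))\mid\mathcal D_n]$, use strong overlap to get $m_n\to\infty$, and control the quantile error by DKW on the oracle paired differences plus a deterministic perturbation of size $\|\hat\mu_1-\mu_1\|_\infty+\|\hat\mu_0-\mu_0\|_\infty$. The only cosmetic difference is that you push that perturbation directly onto the order statistics, while the paper sandwiches $\hat F_{m_n}$ between shifted copies of the oracle empirical cdf and inverts locally; the two are equivalent.
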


\begin{proof}
Set $\tau(x):=\mu_1(x)-\mu_0(x)$, $\hat\tau(x):=\hat\mu_1(x)-\hat\mu_0(x)$, and
$e_n(x):=\hat\tau(x)-\tau(x)$. Also define
$a_n:=\|\hat\mu_1-\mu_1\|_\infty+\|\hat\mu_0-\mu_0\|_\infty$. Then
$\|e_n\|_\infty \le a_n$.

\paragraph{Step 1: Paired residual differences.}
Let $\mathcal T:=\{i\in\{1,\dots,n\}:T_i=1\}$ and
$\mathcal C:=\{i\in\{1,\dots,n\}:T_i=0\}$, and denote
$N_1:=|\mathcal T|$ and $N_0:=|\mathcal C|$. After the treatment-stratified split, define
$\mathcal T_2:=\mathcal T\setminus \mathcal T_1$ and
$\mathcal C_2:=\mathcal C\setminus \mathcal C_1$. For $i\in\mathcal T_2$ and
$j\in\mathcal C_2$, define $\varepsilon_i^{(1)}:=Y_i-\mu_1(X_i)$ and
$\varepsilon_j^{(0)}:=Y_j-\mu_0(X_j)$.

Because $\varepsilon_1\perp T$, the treated residuals
$\{\varepsilon_i^{(1)}:i\in\mathcal T_2\}$ are i.i.d. with the same law as $\varepsilon_1$.
Likewise, because $\varepsilon_0\perp T$, the control residuals
$\{\varepsilon_j^{(0)}:j\in\mathcal C_2\}$ are i.i.d. with the same law as $\varepsilon_0$.
Since different units are independent and $\varepsilon_1\perp \varepsilon_0$, it follows that
conditional on $m_n=m$, the random variables
$W_k:=\varepsilon_{i_k}^{(1)}-\varepsilon_{j_k}^{(0)}$, $k=1,\ldots,m$, are i.i.d. with common
distribution $F_\xi$.

Let $F_m^\circ(x):=\frac{1}{m}\sum_{k=1}^{m}\mathbf 1\{W_k\le x\}$ denote their empirical cdf.
On the event \(\{m_n=m\}\), we write \(F_{m_n}^\circ:=F_m^\circ\).

\paragraph{Step 2: The effective second-stage sample size diverges and has order $(1-\rho_n)n$.}
Because $P\in\mathcal P_{\mathrm{SI}}^{\mathrm{str}}$, there exists $c\in(0,1/2)$ such that
$e(X)=\mathbb P(T=1\mid X)\in[c,1-c]$ almost surely. Hence
$\pi_1:=\mathbb P(T=1)=\mathbb E[e(X)]\in[c,1-c]$ and
$\pi_0:=\mathbb P(T=0)=1-\pi_1\in[c,1-c]$. By the law of large numbers,
$N_1/n\to \pi_1$ and $N_0/n\to \pi_0$ in probability.

Now $|\mathcal T_2|=N_1-\lfloor \rho_n N_1\rfloor$ and
$|\mathcal C_2|=N_0-\lfloor \rho_n N_0\rfloor$. Since $n(1-\rho_n)\to\infty$, we obtain
$|\mathcal T_2|/((1-\rho_n)n)\to \pi_1$ and
$|\mathcal C_2|/((1-\rho_n)n)\to \pi_0$ in probability. Therefore
$m_n/((1-\rho_n)n)\to \min\{\pi_1,\pi_0\}$ in probability. In particular,
$m_n\to\infty$ and $m_n^{-1/2}=O_P\!\left(((1-\rho_n)n)^{-1/2}\right)$.

\paragraph{Step 3: Quantile estimation error.}
Let $\hat W_k:=\bigl(Y_{i_k}-\hat\mu_1(X_{i_k})\bigr)-\bigl(Y_{j_k}-\hat\mu_0(X_{j_k})\bigr)$,
$k=1,\dots,m_n$, and let
$\hat F_{m_n}(x):=\frac{1}{m_n}\sum_{k=1}^{m_n}\mathbf 1\{\hat W_k\le x\}$ denote the
empirical cdf of the plug-in paired differences. Let
$\hat q_\ell:=\hat F_{m_n}^{-1}(\alpha/2)$ and
$\hat q_u:=\hat F_{m_n}^{-1}(1-\alpha/2)$.

For every $k$, $|\hat W_k-W_k|\le \|\hat\mu_1-\mu_1\|_\infty+\|\hat\mu_0-\mu_0\|_\infty=a_n$.
Hence, for every $x\in\mathbb R$,
\[
F_{m_n}^\circ(x-a_n)\le \hat F_{m_n}(x)\le F_{m_n}^\circ(x+a_n).
\]

Write $\delta_n:=\sup_{x\in\mathbb R}|F_{m_n}^\circ(x)-F_\xi(x)|$.
For any integer $m\ge 1$ and any $u>0$, conditional on $m_n=m$, the
Dvoretzky--Kiefer--Wolfowitz inequality gives
\[
\mathbb P\!\left(\sup_{x\in\mathbb R}|F_m^\circ(x)-F_\xi(x)|>u\,\middle|\, m_n=m\right)
\le 2e^{-2mu^2}.
\]
Therefore, for any $M>0$,
$\mathbb P\!\left(\sqrt{m_n}\,\delta_n>M\right)\le 2e^{-2M^2}+\mathbb P(m_n=0)$. Since
$m_n\to\infty$ in probability by Step~2, it follows that $\delta_n=O_P(m_n^{-1/2})$.

Because $f_\xi$ is continuous and strictly positive in neighborhoods of $q_\ell$ and $q_u$, there
exist constants $\underline f>0$, $\bar f<\infty$, and $\epsilon_0>0$ such that
$\underline f\le f_\xi(x)\le \bar f$ for all
$x\in[q_\ell-\epsilon_0,q_\ell+\epsilon_0]\cup[q_u-\epsilon_0,q_u+\epsilon_0]$.
Consequently, for every $x$ in these neighborhoods,
$\left|\hat F_{m_n}(x)-F_\xi(x)\right|\le \delta_n+\bar f a_n$.
Since $F_\xi$ has derivative bounded below by $\underline f$ near $q_\ell$ and $q_u$, the
standard local quantile perturbation argument yields
\[
|\hat q_\ell-q_\ell|+|\hat q_u-q_u|
=
O_P(a_n+\delta_n)
=
O_P\!\left(r_{n,0}+r_{n,1}+m_n^{-1/2}\right).
\]

\paragraph{Step 4: Conditional coverage error.}
By construction, $\hat C_{n,\alpha}(x)=[\hat\tau(x)+\hat q_\ell,\ \hat\tau(x)+\hat q_u]$.
For the new point, $\Delta_{n+1}=\tau(X_{n+1})+\xi_{n+1}$,
where $\xi_{n+1}\sim F_\xi$, $\xi_{n+1}\perp X_{n+1}$, and the test point is independent of
$\mathcal D_n$. Therefore,
\[
\mathbb P_P\!\left(
\Delta_{n+1}\in \hat C_{n,\alpha}(X_{n+1}) \,\middle|\, \mathcal D_n
\right)
=
\mathbb E\!\left[
F_\xi\bigl(\hat q_u+e_n(X_{n+1})\bigr)
-
F_\xi\bigl(\hat q_\ell+e_n(X_{n+1})\bigr)
\,\middle|\, \mathcal D_n
\right].
\]

On the event
$\{|\hat q_\ell-q_\ell|\le \epsilon_0/4,\ |\hat q_u-q_u|\le \epsilon_0/4,\ \|e_n\|_\infty\le \epsilon_0/4\}$,
all arguments of $F_\xi$ above lie inside the neighborhoods on which $f_\xi\le \bar f$.
Hence, uniformly in $x$,
\[
\left|
F_\xi\bigl(\hat q_u+e_n(x)\bigr)-F_\xi(q_u)
\right|
\le
\bar f\bigl(|\hat q_u-q_u|+\|e_n\|_\infty\bigr),
\]
and similarly
\[
\left|
F_\xi\bigl(\hat q_\ell+e_n(x)\bigr)-F_\xi(q_\ell)
\right|
\le
\bar f\bigl(|\hat q_\ell-q_\ell|+\|e_n\|_\infty\bigr).
\]
Since $F_\xi(q_u)-F_\xi(q_\ell)=\left(1-\frac{\alpha}{2}\right)-\frac{\alpha}{2}=1-\alpha$,
it follows that
\begin{align*}
&\left|
\mathbb P_P\!\left(
\Delta_{n+1}\in \hat C_{n,\alpha}(X_{n+1}) \,\middle|\, \mathcal D_n
\right)
-(1-\alpha)
\right| \\
&\qquad\le
\bar f\Bigl(
|\hat q_u-q_u|
+
|\hat q_\ell-q_\ell|
+
2\|e_n\|_\infty
\Bigr).
\end{align*}
Using $\|e_n\|_\infty\le a_n=O_P(r_{n,0}+r_{n,1})$ and the quantile bound established above,
we obtain
\[
\left|
\mathbb P_P\!\left(
\Delta_{n+1}\in \hat C_{n,\alpha}(X_{n+1}) \,\middle|\, \mathcal D_n
\right)
-(1-\alpha)
\right|
=
O_P\!\left(r_{n,0}+r_{n,1}+m_n^{-1/2}\right).
\]
This proves Proposition~\ref{prop:positive_rate_conditional}.
\end{proof}

As an immediate consequence, Proposition~\ref{prop:positive_rate_conditional} implies
Theorem~\ref{prop:positive_rate}. Let
$R_n:=\mathbb P_P\!\left(\Delta_{n+1}\in \hat C_{n,\alpha}(X_{n+1}) \,\middle|\, \mathcal D_n\right)-(1-\alpha)$.
Then $R_n=o_P(1)$ by Proposition~\ref{prop:positive_rate_conditional}, and $|R_n|\le 1$.
Hence bounded convergence gives $\mathbb E[R_n]\to 0$, so
$\mathbb P_P\!\left(\Delta_{n+1}\in \hat C_{n,\alpha}(X_{n+1})\right)=1-\alpha+o(1)$.

The rate in Proposition~\ref{prop:positive_rate_conditional} is governed by two terms:
the first-stage regression error $r_{n,0}+r_{n,1}$ and the second-stage quantile calibration
error $m_n^{-1/2}$. Under strong overlap, $m_n$ is of order $n(1-\rho_n)$ in probability, so
increasing $\rho_n$ typically improves first-stage regression estimation but reduces the
second-stage calibration sample size. For example, under a nonparametric uniform regression
rate of the form $r_{n,t}\asymp (n\rho_n)^{-\beta/(2\beta+d)}$,
the choice of $\rho_n$ reflects the balance between $(n\rho_n)^{-\beta/(2\beta+d)}$ and
$\{n(1-\rho_n)\}^{-1/2}$. This illustrates that the optimal split depends
on the convergence rates of regression estimators and quantile estimators.

%% file: Sections/Stochastic_orders_appendix.tex
\label{app:stochastic_orders}

This appendix records the stochastic-order definitions used in our discussion of conformal meta-learners. More specifically, it provides the formal meanings of first-order stochastic dominance, second-order stochastic dominance, and monotone convex order, following the definitions used in \cite{alaa2023conformal}.

Let $U$ and $V$ be real-valued random variables with cumulative distribution functions $F_U$ and $F_V$.

\begin{definition}[First-order stochastic dominance]
We say that $U$ first-order stochastically dominates $V$, written
\[
U \succeq_{(1)} V,
\]
if
\[
F_U(x) \le F_V(x), \qquad \forall x\in\mathbb R,
\]
with strict inequality for some $x$.
\end{definition}

\begin{definition}[Second-order stochastic dominance]
We say that $U$ second-order stochastically dominates $V$, written
\[
U \succeq_{(2)} V,
\]
if
\[
\int_{-\infty}^x \bigl(F_V(t)-F_U(t)\bigr)\,dt \ge 0,
\qquad \forall x\in\mathbb R,
\]
with strict inequality for some $x$.
\end{definition}

\begin{definition}[Monotone convex order]
We say that $U$ dominates $V$ in monotone convex order, written
\[
U \succeq_{\mathrm{mcx}} V,
\]
if
\[
\mathbb E[u(U)] \ge \mathbb E[u(V)]
\]
for every non-decreasing convex function $u:\mathbb R\to\mathbb R$ for which the expectations are well defined. 
\end{definition}

In the terminology of \cite{alaa2023conformal}, first-order stochastic dominance compares location through the ordering of the cdfs, second-order stochastic dominance compares spread through integrated cdfs, and monotone convex order compares expectations against all non-decreasing convex test functions. These are precisely the three order relations invoked in the stochastic-ordering theorem quoted in Section~\ref{sec:main_results}.

%% file: Sections/Detailed_Exp.tex
\subsection{Experimental Setup}

\label{app:cp-ite-simulation-details}

This appendix gives the full data-generating process and implementation details
for the simulation study.  The design follows
\citet{lei2021conformal}, which is adapted from \citet{wager2018estimation},
with one important modification: both potential outcomes are random, so that the
individual treatment effect is itself a random quantity.

\paragraph{Covariates.}
Let \(X=(X_1,X_2)^\top\).  We generate
\[
    X_j = \Phi(X'_j), \qquad j=1,2,
\]
where \(\Phi\) is the standard normal cdf.  We consider two covariance designs
for \((X'_1,X'_2)\).  In the independent-covariate design,
\((X'_1,X'_2)\sim N(0,I_2)\).  In the correlated-covariate design, we generate
\[
    X'_j = \sqrt{1-\rho_X}\, Z_j + \sqrt{\rho_X}\, Z_0,
    \qquad j=1,2,
\]
where \(Z_0,Z_1,Z_2\) are independent standard normal variables and
\(\rho_X=0.9\).  This transformation produces covariates supported on
\([0,1]^2\), while allowing us to compare independent and strongly correlated
covariate settings.

\paragraph{Potential outcomes.}
Define
\[
    g(x) = \frac{2}{1+\exp\{-12(x-0.5)\}}.
\]
The conditional mean of the treated potential outcome is
\[
    \tau(X) = \mathbb{E}[Y(1)\mid X] = g(X_1)g(X_2),
\]
and the control mean is zero.  The potential outcomes are generated as
\[
    Y(1)=\tau(X)+\sigma(X)\varepsilon_1,
    \qquad
    Y(0)=\sigma(X)\varepsilon_0.
\]
We consider two noise-scale settings:
\[
    \sigma(X)=1
    \quad\text{and}\quad
    \sigma(X)=-\log(X_1).
\]
The first setting is homoscedastic, while the second is heteroscedastic.  In the
implementation, a small numerical offset is used inside the logarithm to avoid
evaluating \(\log(0)\).

The pair \((\varepsilon_0,\varepsilon_1)\) is standard normal marginally, with
\[
    \mathrm{Corr}(\varepsilon_0,\varepsilon_1)
    \in \{-1,-0.5,0\}.
\]
When the correlation is \(-1\), the second error is generated as the
corresponding deterministic multiple of the first.  Otherwise, we generate
\[
    \varepsilon_1
    =
    \rho\,\varepsilon_0 + \sqrt{1-\rho^2}\,\eta,
    \qquad \eta\sim N(0,1),
\]
independently of \(\varepsilon_0\).  We focus on non-positive correlations
because negative dependence between \(Y(0)\) and \(Y(1)\) increases the
variability of the realized ITE \(Y(1)-Y(0)\), making ITE prediction intervals
more challenging.

Combining the two covariate designs and the two noise-scale designs gives four
scenarios:
\[
\begin{array}{ll}
    \text{Scenario 1:} & \text{homoscedastic noise, independent covariates},\\
    \text{Scenario 2:} & \text{heteroscedastic noise, independent covariates},\\
    \text{Scenario 3:} & \text{homoscedastic noise, correlated covariates},\\
    \text{Scenario 4:} & \text{heteroscedastic noise, correlated covariates}.
\end{array}
\]

\paragraph{Treatment assignment.}
Given \(X\), treatment is assigned as
\[
    T\mid X \sim \mathrm{Bernoulli}(e(X)),
\]
and the observed outcome is
\[
    Y = T Y(1) + (1-T)Y(0).
\]
We use two classes of propensity scores.  In the covariate-dependent setting,
following \citet{lei2021conformal}, we set
\[
    e(X)=\frac{1}{4}\{1+\beta_{2,4}(X_1)\},
\]
where \(\beta_{2,4}\) denotes the cdf of a \(\mathrm{Beta}(2,4)\) distribution.
In the constant-propensity setting, we set
\[
    e(X)=p,
    \qquad
    p\in\{0.1,0.3,0.5,0.7,0.9\}.
\]
The constant-propensity settings are included to study the effect of treatment
imbalance.

\paragraph{Difference from the CATE/counterfactual setting.}
In the original simulation design of \citet{lei2021conformal}, the focus is on
counterfactual prediction and CATE-related quantities.  For ITE inference, we do
not set \(Y(0)\) to be deterministic.  Instead, \(Y(0)\) remains random and may
be correlated with \(Y(1)\).  This distinction is important because the target
of inference is the realized individual treatment effect
\[
    \tau_{\mathrm{ITE}} = Y(1)-Y(0),
\]
not the conditional mean contrast \(\mathbb{E}[Y(1)-Y(0)\mid X]\).  Allowing both
potential outcomes to be random therefore creates a setting closer to the
inferential target of individual-level treatment-effect prediction intervals.

\paragraph{Sample sizes and repetitions.}
For each experimental configuration, we generate a training sample of size
\[
    n_{\mathrm{train}}=1000
\]
and an independent test sample of size
\[
    n_{\mathrm{test}}=10000.
\]
Each configuration is repeated \(20\) times using independent random seeds.  We
evaluate target coverages
\[
    0.5,\;0.6,\;0.7,\;0.8,\;0.9.
\]

\paragraph{Methods.}
We compare the following prediction-interval methods.
\begin{itemize}
    \item \textbf{Nest-Inexact}: the inexact nested conformal ITE method
    implemented by \texttt{cfcausal::conformalIte} with
    \(\texttt{algo = ``nest''}\), \(\texttt{exact = FALSE}\), and CQR.
    \item \textbf{Nest-Exact}: the exact nested conformal ITE method implemented
    by \texttt{cfcausal::conformalIte} with
    \(\texttt{algo = ``nest''}\), \(\texttt{exact = TRUE}\), and CQR.
    \item \textbf{Naive}: the naive ITE conformal method implemented by
    \texttt{cfcausal::conformalIte} with \(\texttt{algo = ``naive''}\) and CQR.
    \item \textbf{DR}: a conformal pseudo-outcome meta-learner using the
    doubly-robust pseudo-outcome.
    \item \textbf{IPW}: a conformal pseudo-outcome meta-learner using the
    inverse-propensity-weighted pseudo-outcome.
    \item \textbf{Split-Pair}: the split-and-pair ITE method.
\end{itemize}
For the \texttt{cfcausal} methods, we use quantile random forests for the
outcome quantile learner and boosting for propensity estimation.  For the meta-learners and
split-pair method, we use gradient-boosting regressors with the default
experiment settings.  The split-pair method uses a training/calibration split
ratio of \(0.5\).

\paragraph{Evaluation metrics.}
On the independent test sample, the true realized ITE
\[
    Y_i(1)-Y_i(0)
\]
is observed by construction.  For a fitted interval
\([L(X_i),U(X_i)]\), the empirical coverage is
\[
    \frac{1}{n_{\mathrm{test}}}
    \sum_{i=1}^{n_{\mathrm{test}}}
    \mathbf{1}\{L(X_i)\le Y_i(1)-Y_i(0)\le U(X_i)\},
\]
and the average interval length is
\[
    \frac{1}{n_{\mathrm{test}}}
    \sum_{i=1}^{n_{\mathrm{test}}}
    \{U(X_i)-L(X_i)\}.
\]
The reported tables average these quantities over the \(20\) Monte Carlo
repetitions.

\subsection{Additional Checkerboard Propensity Stress Test}
\label{app:checkerboard_stress_test}

To probe a regime where the propensity score is much harder to learn, we also
consider an additional checkerboard stress-test design, implemented in the
accompanying code. In this experiment, the covariates are sampled directly from
the uniform distribution on the unit square,
\[
    X\sim \mathrm{Unif}([0,1]^2).
\]
We partition \([0,1]^2\) into a \(10\times 10\) grid of equal-sized cells and
assign treatment according to an alternating checkerboard pattern. Writing the
cell indices as
\[
    i=\min\{\lfloor 10X_1\rfloor,9\},
    \qquad
    j=\min\{\lfloor 10X_2\rfloor,9\},
\]
we set
\[
    e(X)=\mathbb P(T=1\mid X)
    =
    \begin{cases}
        0.95, & i+j \text{ even},\\
        0.05, & i+j \text{ odd}.
    \end{cases}
\]
Thus overlap still holds, but the propensity surface is highly discontinuous and
oscillates at a much finer scale than in the main experiments.

Figure~\ref{fig:grid_checkerboard_design} visualizes this construction. The
left and middle panels show the two complementary sets of hard cells: the
\(Y(0)\)-hard cells are exactly those with high propensity, where controls are
rare, while the \(Y(1)\)-hard cells are exactly those with low propensity, where
treated units are rare. The right panel shows the corresponding checkerboard
propensity surface.

\begin{figure}[H]
\centering
\includegraphics[width=.98\textwidth]{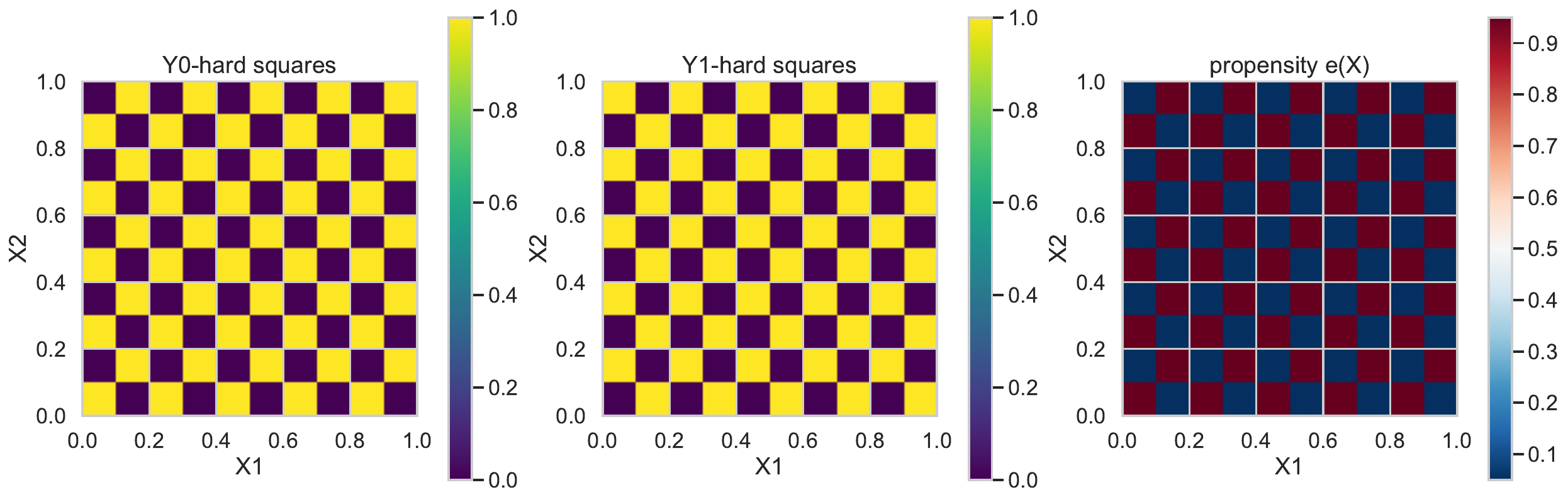}
\caption{Visualization of the checkerboard stress-test design. Left: cells in
which the control potential outcome is harder to learn. Middle: cells in which
the treated potential outcome is harder to learn. Right: the checkerboard
propensity score \(e(X)\), which alternates between \(0.95\) and \(0.05\) over
the \(10\times 10\) grid.}
\label{fig:grid_checkerboard_design}
\end{figure}

The treatment-effect signal is chosen to be the same as in the main simulation
design above.
Let \(R_0(X)\) be the indicator of the high-propensity cells and
\(R_1(X)=1-R_0(X)\). We then define cell-dependent noise scales
\[
    \sigma_0(X)=1+9R_0(X),
    \qquad
    \sigma_1(X)=1+9R_1(X),
\]
so that control outcomes are much noisier in cells where controls are rare and
treated outcomes are much noisier in cells where treated units are rare. The
potential outcomes are generated as
\[
    Y(0)=\sigma_0(X)\varepsilon_0,
    \qquad
    Y(1)=\tau(X)+\sigma_1(X)\varepsilon_1,
\]
where \((\varepsilon_0,\varepsilon_1)\) are standard normal marginally with
correlation \(\rho\in\{-1,-0.5,0\}\). The observed outcome is again
\(Y=TY(1)+(1-T)Y(0)\).

This construction is deliberately adverse for ITE prediction. In high-propensity
cells, there are very few controls precisely where \(Y(0)\) is the hardest to
learn; in low-propensity cells, there are very few treated units precisely where
\(Y(1)\) is the hardest to learn. As a result, finite-sample estimation of the
propensity score and of the two potential-outcome distributions becomes much more
fragile than in the smoother designs above.

We use the same sample sizes, target coverages, and number of Monte Carlo
repetitions as in the main experiments, namely \(n_{\mathrm{train}}=1000\),
\(n_{\mathrm{test}}=10000\), target coverages
\(0.5,0.6,0.7,0.8,0.9\), and \(20\) repetitions. The set of methods is also the
same. For the \texttt{cfcausal} methods, we again use CQR with quantile random
forests for outcome prediction and boosting for propensity estimation, so these
methods must learn the checkerboard propensity from the observed data. For the
DR and IPW meta-learners, the implementation follows the accompanying code; the
remaining tuning parameters are kept fixed across repetitions. Figure~\ref{fig:checkerboard_hard_setting}
in the main text reports the resulting coverage curves.

Table~\ref{tab:grid_checkerboard_ite_all_rhos} reports the corresponding mean
realized coverage and mean interval length under this checkerboard design.

\begin{table}[H]
\centering
\caption{Mean realized coverage and mean interval length under the grid checkerboard DGP. Entries are coverage $\mid$ length.}
\label{tab:grid_checkerboard_ite_all_rhos}
\scriptsize
\resizebox{\textwidth}{!}{%
%
}
\end{table}

\subsection{Detailed Results for Target Coverage 0.9}

We report the detailed numerical results for target coverage \(0.9\). The
three tables in this subsection correspond to \(\rho=0.0\), \(\rho=-0.5\), and
\(\rho=-1.0\), respectively.

\begin{table}[H]
\centering
\caption{Mean realized coverage and mean interval length for target coverage 0.9 across propensity settings when $\rho=0.0$. Entries are coverage $\mid$ length.}
\label{tab:cp_ite_target_0p9_rho_0}
\scriptsize
\resizebox{\textwidth}{!}{%
%
%
}
\end{table}
\begin{table}[H]
\centering
\caption{Mean realized coverage and mean interval length for target coverage 0.9 across propensity settings when $\rho=-0.5$. Entries are coverage $\mid$ length.}
\label{tab:cp_ite_target_0p9_rho_neg05}
\scriptsize
\resizebox{\textwidth}{!}{%
%
%
}
\end{table}

\begin{table}[H]
\centering
\caption{Mean realized coverage and mean interval length for target coverage 0.9 across propensity settings when $\rho=-1.0$. Entries are coverage $\mid$ length.}
\label{tab:cp_ite_target_0p9_rho_neg1}
\scriptsize
\resizebox{\textwidth}{!}{%
%
%
}
\end{table}

\subsection{Detailed Results for Target Coverage 0.8}

We report the detailed numerical results for target coverage \(0.8\). The
three tables in this subsection correspond to \(\rho=0.0\), \(\rho=-0.5\), and
\(\rho=-1.0\), respectively.

\begin{table}[H]
\centering
\caption{Mean realized coverage and mean interval length for target coverage 0.8 across propensity settings when $\rho=0.0$. Entries are coverage $\mid$ length.}
\label{tab:cp_ite_target_0p8_rho_0}
\scriptsize
\resizebox{\textwidth}{!}{%
%
%
}
\end{table}

\begin{table}[H]
\centering
\caption{Mean realized coverage and mean interval length for target coverage 0.8 across propensity settings when $\rho=-0.5$. Entries are coverage $\mid$ length.}
\label{tab:cp_ite_target_0p8_rho_neg05}
\scriptsize
\resizebox{\textwidth}{!}{%
%
%
}
\end{table}

\subsection{Detailed Results for Target Coverage 0.7}

We report the detailed numerical results for target coverage \(0.7\). The
three tables in this subsection correspond to \(\rho=0.0\), \(\rho=-0.5\), and
\(\rho=-1.0\), respectively.

\begin{table}[H]
\centering
\caption{Mean realized coverage and mean interval length for target coverage 0.7 across propensity settings when $\rho=0.0$. Entries are coverage $\mid$ length.}
\label{tab:cp_ite_target_0p7_rho_0}
\scriptsize
\resizebox{\textwidth}{!}{%
%
%
}
\end{table}

\subsection{Detailed Results for Target Coverage 0.6}

We report the detailed numerical results for target coverage \(0.6\). The
three tables in this subsection correspond to \(\rho=0.0\), \(\rho=-0.5\), and
\(\rho=-1.0\), respectively.
\begin{table}[H]
\centering
\caption{Mean realized coverage and mean interval length for target coverage 0.6 across propensity settings when $\rho=0.0$. Entries are coverage $\mid$ length.}
\label{tab:cp_ite_target_0p6_rho_0}
\scriptsize
\resizebox{\textwidth}{!}{%
%
%
}
\end{table}

\subsection{Detailed Results for Target Coverage 0.5}

We report the detailed numerical results for target coverage \(0.5\). The
three tables in this subsection correspond to \(\rho=0.0\), \(\rho=-0.5\), and
\(\rho=-1.0\), respectively.

\begin{table}[H]
\centering
\caption{Mean realized coverage and mean interval length for target coverage 0.5 across propensity settings when $\rho=0.0$. Entries are coverage $\mid$ length.}
\label{tab:cp_ite_target_0p5_rho_0}
\scriptsize
\resizebox{\textwidth}{!}{%
%
%
}
\end{table}